\newcommand{\ptl}{\partial}
\newcommand{\vph}{\varphi}
\numberwithin{equation}{section}
\newtheorem{theorem}{\bf Theorem}[section]
\definecolor{myred}{RGB}{160,0,0}
\definecolor{mygreen}{RGB}{0,160,0}
\definecolor{myblue}{RGB}{0,0,160}
\title{Analytical continuation of two-dimensional wave fields}
\author{Rapha\"{e}l C. Assier$^{*}$ and Andrey V. Shanin$^{\dagger}$\\
	\footnotesize{$^{*}$ Department of Mathematics, University of Manchester, Oxford Road, Manchester, {\rm M13 9PL}, UK}\\
	\footnotesize{$^{\dagger}$ Department of Physics (Acoustics Division), Moscow State University, Leninskie Gory, {\rm 119992}, Moscow, Russia}
}
\begin{document}

	\maketitle

\begin{abstract}
Wave fields obeying the 2D Helmholtz equation on branched surfaces 
(Sommerfeld surfaces) are studied. 
Such surfaces appear naturally as a result of applying the reflection 
method to diffraction problems with straight scatterers bearing ideal 
boundary conditions. This is for example the case for the classical canonical problems of diffraction by a half-line or a segment. In the present work, it is shown that such wave fields admit 
an analytical continuation into the domain of two complex coordinates. 
The branch sets of such continuation are given and studied in detail. For a generic scattering problem, it is shown that the set of all branches of 
the multi-valued analytical continuation of the field  
has a finite basis. Each basis function is expressed explicitly as a Green's integral along so-called double-eight contours. The finite basis property is important in the context 
of coordinate equations, introduced and utilised by the authors previously, as illustrated in this article for the particular case of diffraction by a segment. 
\end{abstract}

\section{Introduction}

In this paper we study 2D diffraction problems for the Helmholtz equation belonging to 
a special class: namely those who can be reformulated as problems of propagation 
on branched surfaces with finitely many sheets. At least two classical canonical diffraction problems belong to this 
class: the Sommerfeld problem of diffraction by a half-line with ideal boundary conditions, 
and the problem of diffraction by an ideal segment. There are also some other important problems belonging to this class, they are listed in Appendix~\ref{app:generalclass}. 
The branched surface for such a problem is referred to as a {\em Sommerfeld surface}
and denoted by~$S$. This surface has several sheets over the real Cartesian plane 
$(x_1 , x_2)$, and these sheets are connected at several branch points.

The solution of the corresponding problem is denoted by $u(x_1, x_2)$ and is assumed to 
be known on $S$.
We consider the possibility to continue the solution $u$ 
into the complex domain of the coordinates $(x_1 ,x_2) \in \mathbb{C}^2$. 
Namely, we are looking for a function $u_{\rm c}(x_1 ,x_2)$ which is 
analytical almost everywhere, obeys the complex Helmholtz equation, and is
equal to $u(x_1 , x_2)$ for real $(x_1 , x_2)$.

It is shown in the paper that such a continuation can be constructed using Green's third identity. The integration contours used involve rather complicated 
loops drawn on $S$, referred to as double-eight or Pochhammer contours. The integrand comprises the function $u$ on $S$, a complexified kernel and their first derivatives. 

The analytical continuation $u_{\rm c}$ is a branched (i.e.\ multivalued) function in $\mathbb{C}^2$. Its 
branch set $T$ is the union of the complex characteristics of the Helmholtz equation passing through the branch points 
of~$S$. At each point of $\mathbb{C}^2$ not belonging to the branch set $T$ 
there exists an infinite number of branches of $u_{\rm c}$, i.e. infinitely many possible values of $u_c$ in the neighbourhood of this point. However, we prove here that these branches have a \textit{finite basis}, 
such that any branch can be expressed as a linear combination of a finite number of \textit{basis functions} with 
integer coefficients. 

Such property of the analytical continuation is an important property of the initial (real) diffraction 
problem. Namely, it indicates that one can build the so-called {\em coordinate equations\/},  
which are ``multidimensional ordinary differential equations''
\cite{Shanin2002a,Shanin2008,Shanin2008a}. Thus, a PDE becomes effectively solved as a finite set of ODEs. 

To emphasise the non-triviality of the statements proven in this paper
and the importance of studying the analytical continuation of the field, 
we can say that we could not generalise the results to the case of 3D diffraction problems yet. It is possible to build a
3D analog of Sommerfeld surfaces (for example, it can be done for the ideal quarter-plane diffraction problem), but at the moment it does not seem possible to show that the analytical continuation 
does possess a finite basis. Thus, a generalisation of the coordinate  
equations seems not to be possible for 3D problems.


 The ideas behind this work were inspired in part by our recent investigations of
applications of multivariable complex analysis to diffraction problems
{\cite{Assier2018a,Assier2019c,Shanin2019SommerfeldTI,Assier2019a,Assier2019b}}, in part by our work on coordinate equations \cite{Shanin2002a,Shanin2008a,Shanin2008,Shanin2005,Assier2012} and in part by the work of Sternin and his co-authors
{\cite{Sternin1994,Kyurkchan1996,Savin2017}}, in which the analytical
continuation of wave fields is considered. 

In  \cite{Sternin1994,Kyurkchan1996,Savin2017}, 
the practical problems of finding minimal 
configurations of sources producing certain fields, or continuation of fields through 
complicated boundaries are solved. 

Different techniques can be used to study the analytical continuation of a wave field. 
We are using Green's theorem as in \cite{Garabedian1960,Sternin1994}. Alternatives include the use of Radon transforms \cite{Sternin1994,Savin2017}, series representations or Schwarz's reflection principle.

The rest of the paper is organised as follows. In Section
\ref{sec:Sommsurfacediffraction} we define the concept of Sommerfeld surface and introduce the notion of diffraction problems on such object. In Section \ref{sec:sec3analcont}, we specify what we mean by the analytical continuation of a wave field $u$ to $\mathbb{C}^2$, and discuss the notion of branching of functions of two complex variables. In Section \ref{sec:intrepofanal}, we give an integral representation that permits to analytically continue $u$ from a point in $S$ to a point in $\mathbb{C}^2$. The obtained function $u_c$ is multi-valued in $\mathbb{C}^2$, and, in Section \ref{sec:analbranchinganal}, we study in detail its branching structure and specify all its possible branches by means of Green's integrals involving double-eight contours. In particular, we show that there exist a finite basis of functions such that any branch of the analytical continuation can be expressed as a linear combination, with integer coefficients, of these basis functions. In Section \ref{sec:strip6}, we apply the general theory developed thus far to the specific problem of diffraction by an ideal strip, showing that, in this case, the number of basis functions can be reduced to four; we describe explicitly and constructively, via some matrix algebra, all possible branches of the analytical continuation. Finally, in Section \ref{sec:coordeq}, still for the strip problem, we show that our results imply the existence of the so-called coordinate equations, effectively reducing the diffraction problem to a set of two multidimensional ODEs.

\section{A diffraction problem on a real Sommerfeld surface}
\label{sec:Sommsurfacediffraction}
Let us start by defining more precisely the concept of Sommerfeld surface. Take $M$ samples of 
the plane $\mathbb{R}^2$ (called \textit{sheets} of the surface), 
each equipped with the Cartesian coordinates $(x_1 , x_2)$. 
Let there exist $N$ affixes of  
branch points $P_1, \dots, P_N \in \mathbb{R}^2$ with coordinates $(X_1^{(j)} , X_2^{(j)})$,
$j = 1,\dots, N$.  Consider
a set of non-intersecting cuts on each sheet, connecting the points 
$P_j$ with each other or with infinity. Finally, let the sides of the cuts be connected (``glued'') to each other 
according to an arbitrary scheme. The connection of the sides should obey the following rules: 
(a) only points having the same coordinates $(x_1 , x_2)$ can be glued to each other;
(b) one can glue a single ``left'' side of a cut to a single ``right'' side  of this cut on another sheet;
(c) a side of a cut should be glued to a side of another cut as a whole. 

Upon allowing spurious cuts, i.e. cuts glued to themselves, it is possible for the set of cuts to be the same on all sheets. The result of assembling the sheets is a Sommerfeld surface
denoted by~$S$.
We assume everywhere that $N$ and $M$ are finite integers.
Two examples of Sommerfeld surfaces are shown in Figure~\ref{fig01}.

The concept of Sommerfeld surfaces is naturally very close to that of Riemann surfaces of 
analytic functions of a complex variable. 
For example, the Sommerfeld surfaces shown in Figure~\ref{fig01} can be 
treated as the Riemann surfaces of the functions $\sqrt{x_1 + i x_2}$ and 
$\sqrt{(x_1 + i x_2)^2 - a^2}$ respectively.  
However, here we prefer to refer to them as Sommerfeld surfaces
since the coordinates $x_1$ and $x_2$ are real on it, and since we 
would like to avoid confusion with the complex context that will 
be developed below. 

\begin{figure}[ht]
\centering
\includegraphics[width=0.8\textwidth]{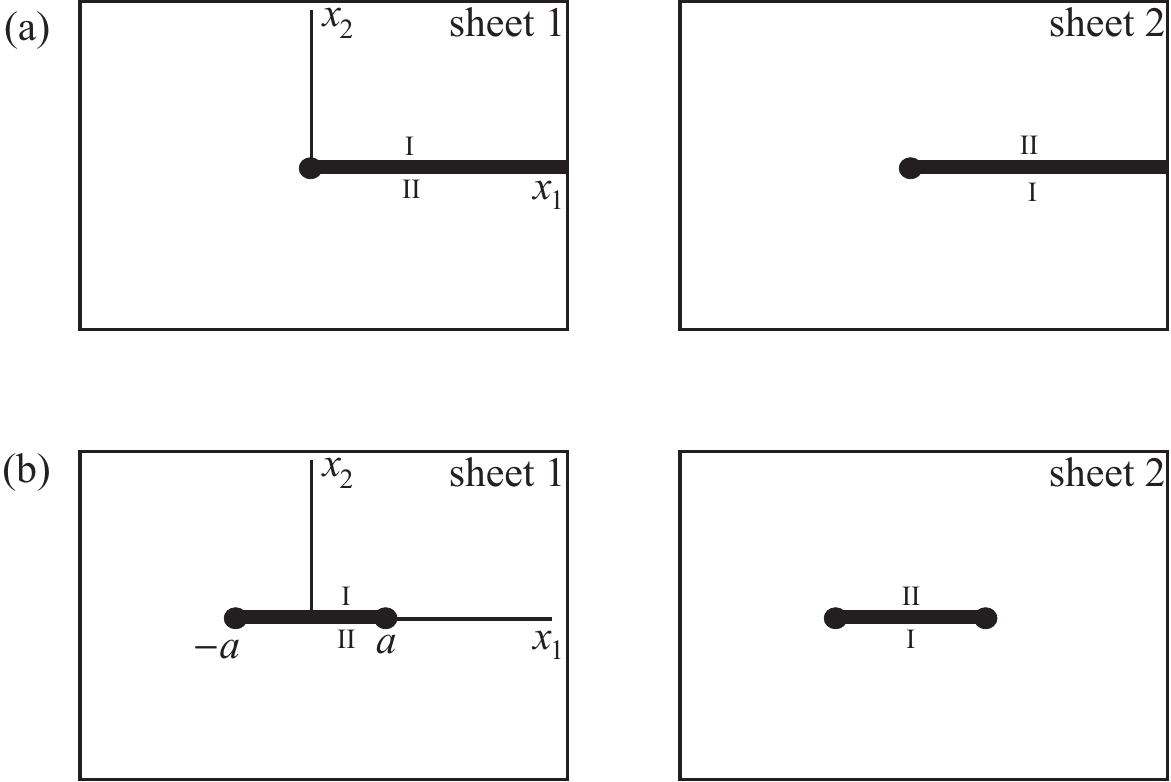}
\caption{Sommerfeld surfaces for the 2D problems of the Dirichlet half-line (a),
and the Dirichlet segment (b). The cuts are shown by thick lines. The sides of the cuts 
glued to each other bear the same Roman number. The associated branch points are denoted by thick black dots.}
\label{fig01}
\end{figure}

Sommerfeld surfaces emerge naturally when the reflection method is applied 
to a 2D diffraction problem with straight ideal boundaries. For example, the surfaces 
shown in Figure~\ref{fig01} help one to solve the classical Sommerfeld problem
of diffraction by a Dirichlet half-line 
\cite{Sommerfeld1896} and the problem of diffraction 
by a Dirichlet segment \cite{Shanin2002a}.   
A connection between the diffraction problems and the Sommerfeld surfaces is 
discussed in more details in Appendix~\ref{app:generalclass}, where the class of scatterers leading to finite-sheeted Sommerfeld surfaces is described. 

There exists a natural projection $\psi$ of a Sommerfeld surface $S$ 
to $\mathbb{R}^2$. For any small neighbourhood 
$U \subset \mathbb{R}^2$ not including any of the branch points 
$P_j$ the pre-image $\psi^{-1} (U)$ is a set of $M$ samples of $U$, as illustrated in Figure \ref{fig:illustrationsommefeldsurface} . 

\begin{figure}[h]
	\centering
	\includegraphics[width=0.4\textwidth]{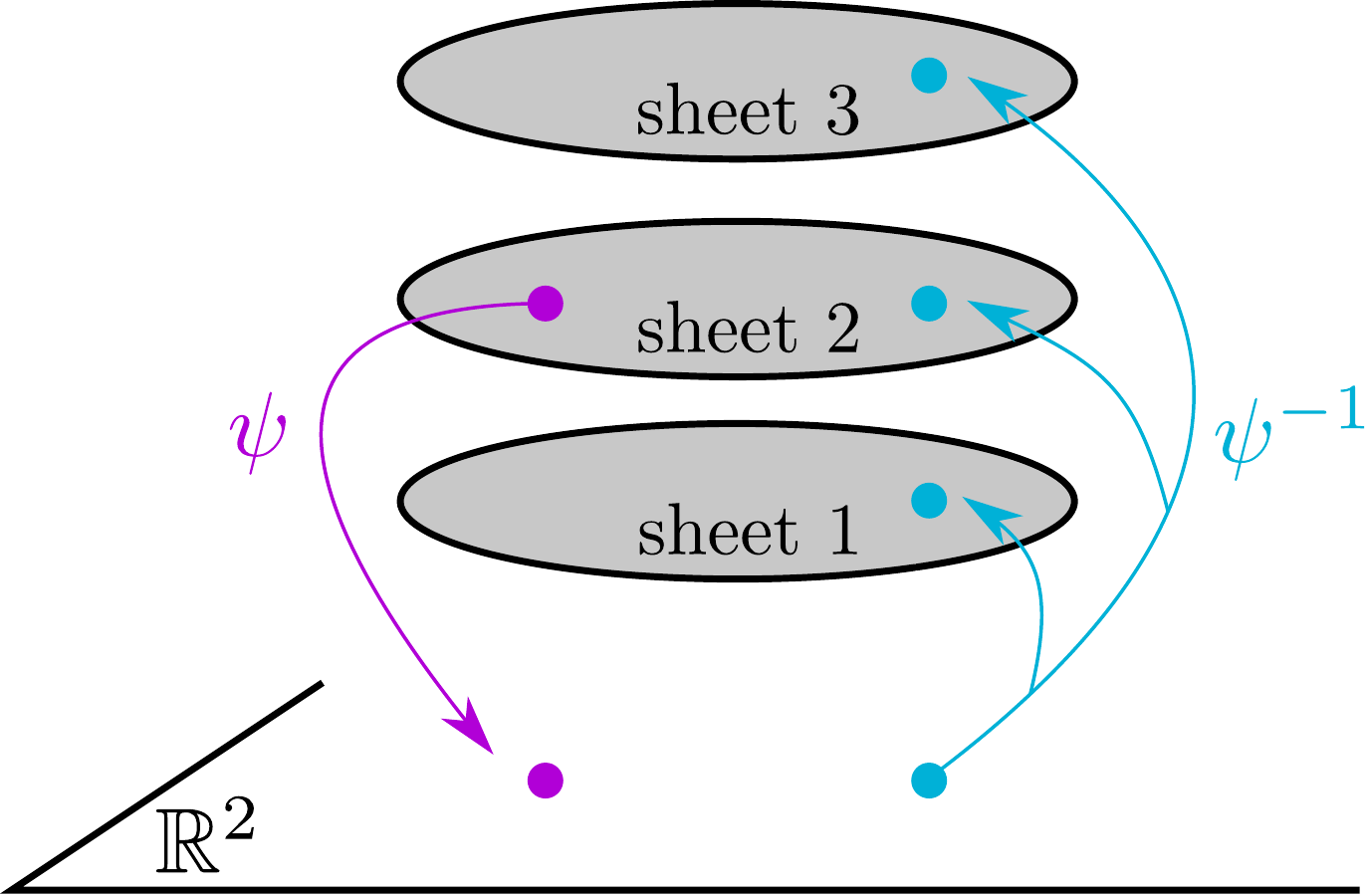}
	\caption{Diagrammatic illustration of the natural projection $\psi$ of a Sommerfeld surface $S$ with $M= 3$.}
	\label{fig:illustrationsommefeldsurface}
\end{figure}

Let $u$ be a continuous single-valued function on some Sommerfeld surface~$S$
(thus, $u$ is generally an $M$-valued function on $\mathbb{R}^2$).  
For any neigbourhood $U\subset\mathbb{R}^2\backslash\{P_1,\dots,P_N\}$, let $u$ obey the (real) Helmholtz equation 
\begin{equation}
(\ptl_{x_1}^2 + \ptl_{x_2}^2 + \mathscr{k}^2) u (x_1 , x_2) = 0,
\label{eq:Helmholtz}
\end{equation}
on each sample of $\psi^{-1} (U)$. The wavenumber parameter $\mathscr{k}$ is chosen to have a positive real part and a vanishingly 
small positive imaginary part mimicking damping of waves. 

Let $u$ also obey the Meixner condition
at the branch points $P_j$. The Meixner
condition characterises the local finiteness of the energy-type 
integral 
\[
\iint(|\nabla u|^2 + |u|^2) \, dx_1 dx_2,
\] 
near a branch point. 
In particular, it guarantees the absence of sources at the branch points of~$S$. 

A function $u$ obeying the equation (\ref{eq:Helmholtz}) in the sense explained above 
and the Meixner condition  will be referred to as a function obeying 
the Helmholtz equation on~$S$.

Let us now formulate a diffraction problem on a Sommerfeld surface $S$: find a function $u$ obeying the Helmholtz equation on $S$
that can be represented as a sum
\[
u = u^{\rm in} + u^{\rm sc},
\]
where $u^{\rm in}$ is a known incident 
field, which is equal to a plane wave or to zero, depending on the sheet: 
\begin{equation}
u^{\rm in} = \left\{ \begin{array}{l}
\exp \{ - i \mathscr{k}(x_1 \cos \vph^{\rm in} + x_2 \sin \vph^{\rm in})   \}, \\
0 .
\end{array} \right.
\label{eq:Uin}
\end{equation}
Here $\vph^{\rm in}$ is the angle of incidence. 
The incident wave is only non-zero on a single  sheet of $S$, and is zero 
on the other sheets. Neither $u^{\rm in}$ nor $u^{\rm sc}$ are continuous, 
but their sum is. 
The scattered field $u^{\rm sc}$ should also obey the limiting absorption principle, i.e.
be exponentially decaying as $\sqrt{x_1^2 + x_2^2} \to \infty$. 

In the rest of the paper, we assume that the existence and uniqueness theorem is proven for the chosen 
$S$ and~$\mathscr{k}$ and that the field $u$ is fully known on $S$.


\section{Analytical continuation of the field and its branching}
\label{sec:sec3analcont}

Our aim is to build an analytical continuation $u_{\rm c} (x_1 , x_2)$
of the solution $u(x_1, x_2)$ of a certain diffraction problem on a Sommerfeld 
surface~$S$. The continuation has the following sense. 

Let $x_1$ and $x_2$ be complex variables, i.e.\ $(x_1 , x_2) \in \mathbb{C}^2$.
Naturally, $\mathbb{C}^2$ is a space of real dimension~4.
Let $T \subset \mathbb{C}^2$ be a singularity set (built below), 
which is a union of several manifolds of real dimension~2.  
Let $T$ be such that the intersection of $T$ with the real plane 
$\mathbb{R}^2 \subset \mathbb{C}^2$ is the set of branch points~$P_1,\dots,P_N$.

The continuation $u_{\rm c}$ should obey three conditions.

\begin{itemize}

\item 
The continuation $u_{\rm c} (x_1 , x_2)$ 
should be a multivalued analytical function 
on $\mathbb{C}^2 \setminus T$. Each branch of $u_{\rm c}$ in any small 
domain $U\subset \mathbb{C}^2$ not intersecting with $T$ should obey the Cauchy--Riemann
conditions
\begin{equation}
\bar \ptl_{x_\ell} u_{\rm c} = 0, 
\quad \text{where} \quad
\bar \ptl_{x_\ell} \equiv \frac{1}{2} \left( 
  \frac{\ptl}{\ptl {\rm Re}[x_\ell]} + 
i \frac{\ptl}{\ptl {\rm Im}[x_\ell]}
\right),
\quad 
\ell \in \{ 1,2 \} .
\label{eq:Cauchy}
\end{equation}

\item
The continuation $u_{\rm c} (x_1 , x_2)$ should obey the complex Helmholtz equation 
in $\mathbb{C}^2 \setminus T$:  
\begin{equation}
(\ptl_{x_1}^2 + \ptl_{x_2}^2 + \mathscr{k}^2) u_{\rm c}  = 0,
\quad
\mbox{where}
\quad 
\ptl_{x_\ell} \equiv \frac{1}{2} \left( 
  \frac{\ptl}{\ptl {\rm Re}[x_\ell]} - 
i \frac{\ptl}{\ptl {\rm Im}[x_\ell]}
\right),
\quad 
\ell \in \{ 1,2 \}.
\label{eq:ComplexHelmholtz}
\end{equation}
Note that the notation $\ptl_{x_\ell}$ is used both in the real (\ref{eq:Helmholtz})
and in the complex (\ref{eq:ComplexHelmholtz}) context. However 
one can see that if the Cauchy--Riemann conditions are valid, the complex derivative 
gives just the same result as the real one. 

\item 
When considering the restriction of $u_{\rm c}(x_1 , x_2)$ onto $\mathbb{R}^2$, 
 over each point of  $\mathbb{R}^2 \setminus \{P_1, \dots P_N \}$, there should exist 
$M$ branches of $u_{\rm c}$ equal to the values of $u$ on $S$.

\end{itemize}

Let us describe, without a proof, the structure of the singularity set $T$ and the branching structure of $u_{\rm c}$. Later on, we shall build $u_{\rm c}$ explicitly, and one will be able to check the correctness of these statements. 

The branch points $P_j$ are singularities of the field $u$ on the real plane. 
According to the general theory of partial differential equations, 
the singularities propagate along the characteristics of the PDE (the Helmholtz equation here). 
Thus, it is natural to expect that $T$ is the union of the characteristics 
passing through the points $P_j=(X_1^{(j)},X_2^{(j)})\in\mathbb{R}^2$. Since the Helmholtz equation is elliptic, these characteristics are complex. They are given for $j\in{1,\dots,N}$ by 
\begin{eqnarray}
  L_1^{(j)} & = & \{ (x_1, x_2) \in \mathbb{C}^2, x_1 +i x_2 =
  X_1^{(j)} + iX_2^{(j)} \},  \label{eq:branch2lineL1j}\\
  L_2^{(j)} & = & \{ (x_1, x_2) \in \mathbb{C}^2, x_1 -i x_2 =
  X_1^{(j)} - iX_2^{(j)} \} .  \label{eq:branch2lineL2j}
\end{eqnarray}  
One can see that $L_{1,2}^{(j)}$ are complex lines having real dimension~2. 
We will hence refer to them as {\em 2-lines}. Their intersection with $\mathbb{R}^2$ is the 
set of the points~$P_j$, i.e.\ $L_1^{(j)}\cap L_2^{(j)}=P_j$. 

We demonstrate below that the 2-lines $L_{1,2}^{(j)}$ are, generally, branch 2-lines of 
$u_{\rm c}$. The branching of functions of several complex variables is not a well-known matter, thus, we should explain what it means. Consider for example a small neighbourhood 
$U \subset \mathbb{C}^2$ of a point on $L_1^{(1)}$,
which is not a crossing point of two such lines. 
Note that the complex variable
\[
z_1= x_1 +i x_2 - (X_1^{(1)} + iX_2^{(1)})
\]
is a coordinate transversal to $L_1^{(1)}$. 
The 2-line $L_1^{(1)}$ corresponds to $z_1 = 0$. 
The complex variable 
\[
z_2 = x_1 - i x_2
\] 
is then a coordinate tangential to $L_1^{(1)}$.  

Let $A$ be some point in $U$. Consider a path (oriented contour) 
$\sigma$ in $U$ starting and ending at $A$, and having no intersections with~$L_1^{(1)}$. Such a contour, called a \textit{bypass} of $L_1^{(1)}$, can be projected onto the 
variable~$z_1$. Denote this projection by $\sigma'$.  
One can continue $u_{\rm c}$ along $\sigma$ and obtain the branch 
$u_{\rm c}(A ; \sigma)$. Branches can be indexed by an integer $p$, which is the winding number of $\sigma'$ about zero.
If for some $\sigma'$ having winding number~$p$ 
\[
u_{\rm c}(A; \sigma) = u_{\rm c}(A)
\]    
for any such continuation (here we consider the smallest possible $p$ having this 
property), then the branch line $L_1^{(1)}$ has order of branching equal to $p$. 
If there is no such $p$, the branching is said to be logarithmic. 

Thus, generally speaking, the branching of a
function of several complex variables is similar to that of a single variable, and it is convenient to study this branching using a transversal complex coordinate. To provide 
the existence of such a transversal variable, the branch set should be a set (a complex manifold) of complex codimension~1.

For $j,k\in \{1,\dots N\}$, the 2-lines $L_1^{(j)}$
and $L_2^{(k)}$ intersect at a single point. For example, if $j = k$, this point is $P_j$, while for $j\neq k$, this intersection point does not belong to $\mathbb{R}^2$. The branching of 
$u_{\rm c}$ near each crossing point has a property that is new comparatively 
to the 1D complex case: the bypasses about $L_1^{(j)}$ and $L_2^{(k)}$ commute.

Let us prove this in the case $j=k$ by considering a small neighbourhood $U \subset \mathbb{C}^2$ of the point $P_j=L_1^{(j)}\cap L_2^{(j)}$. The case $j\neq k$ is similar.
Introduce the local coordinates 
\[
z_1 = x_1 +i x_2 - (X_1^{(j)} + iX_2^{(j)}),
\qquad 
z_2 = x_1 -i x_2 - (X_1^{(j)} - iX_2^{(j)}),
\]
which are transversal variables to 
$L_1^{(j)}$ and $L_2^{(j)}$ respectively. 
Take a point $A \in U \setminus (L_1^{(j)} \cup L_2^{(j)})$ and a path 
$\sigma$ in 
$U \setminus (L_1^{(j)} \cup L_2^{(j)})$ starting and ending at $A$.
Consider the projections $\sigma_1$ and $\sigma_2$ of $\sigma$ onto the 
complex planes of $z_1$ and $z_2$ respectively.

Assume that the path $\sigma$ is parametrised 
by a real parameter $\tau \in [0,1]$, i.e.\ 
\[
\sigma \, : \quad (x_1 (\tau) , x_2 (\tau)), 
\]
or, in the new coordinates,  
\[
\sigma \, : \quad (z_1 (\tau) , z_2 (\tau)).  
\]
The path $\sigma$ can be deformed homotopically into a path $\sigma^*$ defined by 
\[
\sigma^* \, : \quad (\epsilon z_1 (\tau)/ |z_1 (\tau)| ,
                     \epsilon z_2 (\tau)/ |z_2 (\tau)|),  
\]
for some small parameter $\epsilon$. The projection of $\sigma^*$ onto $z_1$ (resp.~$z_2$) is a small circle $\sigma^*_1$ (resp.~$\sigma^*_2$) of radius $\epsilon$ turning (possibly many times) around the origin. Therefore, $\sigma^*$ lies on a torus (product of two circles), for which $\sigma_1^*$ and $\sigma_2^*$ are strictly longitudinal and latitudinal paths respectively. Thus these loops commute (this comes from the fact that the fundamental group of the torus is Abelian). Therefore, the path 
$\sigma $ 
can be homotopically deformed into the concatenations 
$
\sigma = \sigma_1 \sigma_2 = \sigma_2 \sigma_1, 
$ 
where the path $\sigma_1$ occurs for fixed $z_2$, and $\sigma_2$ occurs for 
fixed~$z_1$. 


\section{Integral representation of the analytical continuation }
\label{sec:intrepofanal}

Here we are presenting the technique for analytical continuation of the wave field utilising Green's third identity as in e.g.\ \cite{Garabedian1960,Sternin1994}. 

Let $U \subset (S \setminus \cup_j P_j)$ be a small neighbourhood of a point   
$A_0 \in S$ such that $\psi(A_0)\!\!=\!\!(x_1,x_2)\!\!\in\!\!\mathbb{R}^2$, where $\psi$ is the natural projection of $S$ to $\mathbb{R}^2$. In what follows, in an abuse of notation, we may sometimes identify $A_0$ and $\psi(A_0)$ when the context permits to do so. 
Let the contour $\gamma$ be the boundary of $U$ 
oriented anticlockwise with unit external normal vector~$\boldsymbol{n}$.
Write Green's third identity for $A_0 \in U$:
\begin{eqnarray}
  u (A_0) & = & \int_{\gamma} \left[ \frac{\ptl G}{\ptl n'}
  ({\bf r}, {\bf r}') u ({\bf r}') - 
    \frac{\ptl u}{\ptl n'} ({\bf r}') G ({\bf r}, {\bf r}') \right]
  dl', \label{eq:initialgreensidentity}
\end{eqnarray}
where ${\bf r}'=(x_1',x_2')$ is a position vector along $\gamma$,
${\bf r}=(x_1,x_2)$ points to $A_0$,
 $\ptl /
\ptl n'$ corresponds to the normal derivative associated to the unit external normal
vector, and 
\begin{equation}
  G ({\bf r}, {\bf r}') = - \frac{i}{4} 
  H_0^{(1)} (\mathscr{k} \, r({\bf r}, {\bf r}')) \quad \text{ with } \quad  
  r ({\bf r},
  {\bf r}') = \sqrt{(x_1 - x_1')^2 + (x_2 - x_2')^2},
  \label{eq:Green}
\end{equation}
where $H_0^{(1)}$ is the zeroth-order Hankel function of the first kind. Note
that the point $A_0$ is the only singularity of
the integrand of (\ref{eq:initialgreensidentity}) in the real $(x_1', x_2')$
plane.

The orientation of $\gamma$ plays no role in (\ref{eq:initialgreensidentity}), 
however we can use graphically the orientation of contours to set the direction of the 
normal vector. Namely, let the normal vector point {\em to the right\/} from an oriented contour.  

The formula  (\ref{eq:initialgreensidentity}) can be used to continue $u(x_1 , x_2)$ to 
$u_{\rm c} (x_1 , x_2)$ in a small domain of $\mathbb{C}^2$. 
Namely let $A\equiv{\bf r}=(x_1, x_2)\in\mathbb{C}^2$ be complex, while $(x_1' , x_2')$ remains 
real and belongs
to~$\gamma$. If $(x_1 , x_2)$ is close to $A_0$, the Green's function 
$G({\bf r}, {\bf r'})$ remains regular for each ${\bf r'} \in \gamma$.
Moreover, being considered as a function of ${\bf r}$, the Green's 
function $G({\bf r} , {\bf r}')$ obeys the Cauchy--Riemann conditions
(\ref{eq:Cauchy}) and the complex Helmholtz equation (\ref{eq:ComplexHelmholtz})
provided ${\bf r}$ is a regular point for certain fixed~${\bf r}'$.
Thus, for a small complex neighbourhood of $A_0$ the formula 
(\ref{eq:initialgreensidentity}) provides 
a function obeying all restrictions (listed in Section \ref{sec:sec3analcont}) imposed on 
$u_{\rm c} (A)$.
 
We should note that the continuation $u_{\rm c}$ of $u$ in a small complex neighbourhood of $A_0$
is unique 
and is provided by letting ${\bf r}$ become a complex vector in (\ref{eq:initialgreensidentity}). 
The proof is given in Appendix~\ref{app:Greens} and its structure is as follows. We start by deriving a complexified Green's formula obeyed by $u_{\rm c}$
(or by any analytical solution of (\ref{eq:ComplexHelmholtz})). 
Then, by Stokes' theorem, the integration 
contour for this formula for $A$ belonging to some small complex neighbourhood of $A_0$
can be taken to coincide with~$\gamma$. In this 
case, the complexified Green's formula for $u_{\rm c}$ coincides with     
(\ref{eq:initialgreensidentity}).

The procedure of analytical continuation of $u$ using (\ref{eq:initialgreensidentity}) 
fails when $G({\bf r}, {\bf r}')$ becomes singular for some ${\bf r}'$. 
Let us develop a simple graphical tool to explore the singularities of 
$G({\bf r}, {\bf r}')$ for complex~${\bf r}\equiv A$. 
The function (\ref{eq:Green}) is singular when 
\begin{equation}
(x_1 - x_1')^2 + (x_2 - x_2')^2 = 0,
\label{eq:R0}
\end{equation}
i.e.\ when $A\equiv (x_1,x_2)$ and $A'\equiv (x_1',x_2')$ both belong to some characteristic of  (\ref{eq:ComplexHelmholtz}).
Let us fix the complex point $A$ and find the real points $A'\equiv (x_1' , x_2' )$
such that (\ref{eq:R0}) is valid. 
Obviously, $A'$ can have two values 
\begin{eqnarray}
  A_1 & = & ({\rm Re} [x_1] - {\rm Im} [x_2], {\rm Im} [x_1] + {\rm Re}
  [x_2]),  \label{eq:firstrealpoint}\\
  A_2 & = & ({\rm Re} [x_1] + {\rm Im} [x_2], {\rm Re} [x_2] - {\rm Im}
  [x_1]).  \label{eq:secondrealpoint}
\end{eqnarray}
These two points coincide when $A$ is a real point. 
We will call $A_1$ and $A_2$ the {\em first and the second real points associated 
with~$A$} and will sometimes use the notation $A_1(A)$ and $A_2(A)$ to emphasise the link between $A_{1,2}$ and $A$.
Both points $A_1$ and $A_2$ belong to~$\mathbb{R}^2$. Indeed, beside $A_1$ and $A_2$, one can 
consider their preimages $\psi^{-1} (A_1)$ and $\psi^{-1} (A_2)$ on~$S$.

The Green's function $G({\bf r} , {\bf r}')$ is singular at some point of the 
integral contour $\gamma$ in (\ref{eq:initialgreensidentity})
if $A_1(A) \in \gamma$ or $A_2(A) \in \gamma$, where ${\bf r}$ points to~$A$. 

Consider the analytical continuation along a simple path $\sigma$ as a 
continuous process. Let $\sigma$ be parametrised by a real parameter 
$\tau \in [0,1]$, i.e.\ let each point on the contour 
$\sigma$ be denoted by $A(\tau)$. 
Let $A(0) = A_0 \in S$ 
be the starting real point, and let $A(1)$ be the ending complex point.
For each point $A(\tau)$ find the associated real 
points $A_{1,2} (\tau) \equiv  A_{1,2} (A(\tau))$. The position of these points depends continuously on 
$\tau$. 

We are now well-equipped to formulate the first theorem of analytical continuation. 

\begin{theorem}
\label{th:Cont}
Let $A_0 \in S \setminus \cup_jP_j$ and let $A\in \mathbb{C}^2$ be within a small neighbourhood of $A_0$. Let $\sigma$ be a simple path from $A_0$ to $A$ parametrised by $\tau\in[0,1]$ as above. Let $\Gamma (\tau) \subset S$ 
be a continuous set of closed smooth oriented contours (i.e.\ a homotopical 
deformation of a contour) such that 
\begin{itemize}
\item
$ \Gamma(0) = \gamma $ ; 

\item
for each $\tau \in [0, 1]$
$A_1(\tau) \notin \psi (\Gamma(\tau))$, 
$A_2(\tau) \notin \psi (\Gamma(\tau))$.

\end{itemize}
Then the formula  
\begin{equation}
  u_{\rm c} (A(\tau) + \delta {\bf r})  =  \int_{\Gamma(\tau)} \left[ \frac{\ptl G}{\ptl n'}
  ({\bf r}, {\bf r}') u ({\bf r}') - 
    \frac{\ptl u}{\ptl n'} ({\bf r}') G ({\bf r}, {\bf r}') \right]
  dl', 
  \label{eq:ContFormula}
\end{equation}
defines an analytical continuation $u_{\rm c}$ of $u$ 
 in a narrow neighbourhood of~$\sigma$. $\delta {\bf r}$ is an arbitrary 
 small-enough complex radius vector, while
the radius-vector ${\bf r}$ points to $A(\tau) + \delta {\bf r}$. 
\end{theorem}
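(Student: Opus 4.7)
The plan is to verify Theorem~\ref{th:Cont} in three logical steps: local analyticity of the integral at each fixed $\tau$, reduction to Green's third identity at $\tau=0$, and contour-invariance used to glue the local extensions together along $\sigma$.

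For the first step, fix $\tau\in[0,1]$. By hypothesis $A_1(\tau),A_2(\tau)\notin\psi(\Gamma(\tau))$, so by continuity there exists an open neighbourhood $V_\tau\subset\mathbb{C}^2$ of $A(\tau)$ such that for every $\mathbf{r}$ pointing into $V_\tau$ and every $\mathbf{r}'\in\psi(\Gamma(\tau))$ the singular locus (\ref{eq:R0}) of $G$ is avoided. Since $G(\mathbf{r},\mathbf{r}')$ satisfies the Cauchy--Riemann relations (\ref{eq:Cauchy}) and the complex Helmholtz equation (\ref{eq:ComplexHelmholtz}) in its first argument wherever regular, differentiating under the integral sign shows that the right-hand side of (\ref{eq:ContFormula}) is analytic in $\delta\mathbf{r}$ and obeys (\ref{eq:ComplexHelmholtz}) throughout $V_\tau$. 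At $\tau=0$ the contour is $\gamma$, the points $A_1(0)=A_2(0)=A_0$ lie inside it, and for real $\delta\mathbf{r}$ the right-hand side is precisely Green's third identity (\ref{eq:initialgreensidentity}), so it reproduces $u(A_0+\delta\mathbf{r})$. Combined with the uniqueness of analytic continuation (cf.\ the discussion preceding the theorem and Appendix~\ref{app:Greens}), this identifies the $\tau=0$ local function with the desired continuation $u_{\rm c}$ in a complex neighbourhood of $A_0$.

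The remaining and central point is to show that for any two parameters $\tau,\tau'\in[0,1]$ close enough that $V_\tau\cap V_{\tau'}$ is nonempty, the integrals over $\Gamma(\tau)$ and $\Gamma(\tau')$ produce the same value at each common point $\mathbf{r}$, so that the local analytic pieces glue into a single analytic function $u_{\rm c}$ along $\sigma$. This rests on a contour-invariance lemma: if two closed contours on $S$ are homotopic through contours along which the integrand of (\ref{eq:ContFormula}) is regular, then the two integrals coincide. It follows from a complexified Green's second identity applied to the 2-chain on $S$ swept out by the homotopy; since both $u$ (by assumption on $S$) and $G(\mathbf{r},\cdot)$ (thanks to the hypothesis that $A_1,A_2$ avoid $\psi(\Gamma)$ throughout) solve the Helmholtz equation on this chain, the Wronskian-type 1-form in the integrand is closed, and Stokes' theorem forces its integral over the boundary of the swept chain to vanish. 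This is precisely the complex Stokes argument sketched in Appendix~\ref{app:Greens}. Applying it along the continuous family $\Gamma(\tau)$ yields the required consistency and the desired global $u_{\rm c}$ in a narrow tubular neighbourhood of $\sigma$.

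The hard part will be the contour-invariance step, since it requires a genuinely complex version of Green's/Stokes' theorem on $S$ applied to 2-chains in the complement of the singular locus of $G$, rather than the elementary planar statement, and care is needed to verify that the regularity hypothesis on $A_1(\tau),A_2(\tau)$ does propagate to regularity of the Wronskian 1-form on the entire swept 2-chain (not just on its boundary). Once that lemma is in hand, the local analyticity step and the base-case identification with Green's third identity are essentially routine differentiation-under-the-integral and symbol-matching arguments.
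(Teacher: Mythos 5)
Your proposal is correct and follows essentially the same route as the paper's own (admittedly sketchy) proof: local analyticity of the Green's integral for each fixed contour, identification with Green's third identity at $\tau=0$, and gluing of overlapping local pieces via homotopy invariance of the integral (which the paper asserts and you justify by Stokes applied to the swept 2-chain, consistent with Appendix~\ref{app:Greens}). The only point you gloss over, which the paper handles in a remark immediately after its proof, is that the branch of $G$ (square root and Hankel function) on the moving contour must itself be fixed by continuity in $\tau$, which is possible precisely because $\Gamma(\tau)$ never meets $A_{1,2}(\tau)$.
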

  
\begin{proof} 
We present a sketch of the proof on the ``physical level of rigour''. 
Let the contour $\Gamma(\tau)$ be changing 
incrementally, i.e.\ consider the contours $\Gamma(\tau_n)$, where 
$0 = \tau_0 < \tau_1 < \tau_2 \dots < \tau_K = 1$
is a dense grid on the segment $[0,1]$. 
Each fixed contour $\Gamma (\tau_n)$ provides an analytic function 
$u_{\rm c}$ in a small neighbourhood of the point $A(\tau_n)$. 
The grid is dense enough to ensure that such neighbourhoods are overlapping.
Moreover, for any point belonging to an intersection of neighbourhoods 
of $A(\tau_n)$ and $A(\tau_{n+1})$ one can deform the contour 
$\Gamma(\tau_n)$ into $\Gamma(\tau_{n+1})$ homotopically without changing the
value of the integral, and hence without changing the value of $u_{\rm c}$. 
\end{proof}

Note that, formally, the expression (\ref{eq:ContFormula})  defines the
field $u_{\rm c}$ ambiguously. The values of $u$ and $\ptl_{n'} u$ on $S$
are found in a unique way, but the values of $G$ and $\ptl_{n'} G$ should be 
clarified. Namely, according to (\ref{eq:Green}), the value depends on the 
branch of the square root and of the Hankel function (having a logarithmic 
branch point at zero). 

For $\tau = 0$,
let $G$ be defined on $\gamma = \Gamma (0)$ in the ``usual'' way: 
the square root is real positive, and the values of $H_0^{(1)}(\cdot)$ 
are belonging to the main branch of this function. Then, as $\tau$
changes continuously from $\tau = 0$ to $1$,   
define the values of $G$ and $\ptl_{n'} G$ by continuity.
Since the (moving) contour $\Gamma(\tau)$ does not hit the 
(moving) singular points $A_1 (\tau)$, $A_2 (\tau)$, the branch of $G$
is defined consistently. 

 
 The last theorem in this section extends the local result of Theorem \ref{th:Cont} to a global result.

\begin{theorem}
\label{th:Sing}
Let $B$ be a point of $\mathbb{C}^2 \setminus (T \cup \mathbb{R}^2)$, where 
$T$ is the union of all the 2-lines $L_{1,2}^{(j)}$. Let $A_0$ be a point belonging 
to $S \setminus (\cup_j P_j)$ and let $\sigma$ be a smooth  path 
connecting $A_0$ with $B$, such that 
\[
(\sigma \setminus A_0)  \cap (T \cup \mathbb{R}^2) = \emptyset.
\] 
Then there exists a family of contours  $\Gamma (\tau)$ associated with 
$\sigma$ and obeying the conditions
of Theorem~\ref{th:Cont}.  
\end{theorem}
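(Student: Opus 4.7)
The plan is to reduce Theorem~\ref{th:Sing} to a planar obstacle-avoidance problem in $\mathbb{R}^2$ and then lift the resulting family of planar curves to $S$ via covering-space theory. First I would analyse what the hypothesis $\sigma\setminus A_0 \subset \mathbb{C}^2\setminus(T\cup\mathbb{R}^2)$ says about the associated real points. A direct calculation from (\ref{eq:firstrealpoint})--(\ref{eq:secondrealpoint}) shows that $A_1(A)=P_j$ is equivalent to $A\in L_1^{(j)}$, and similarly $A_2(A)=P_j$ is equivalent to $A\in L_2^{(j)}$; moreover $A_1(A)=A_2(A)$ precisely when $A\in\mathbb{R}^2$. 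Consequently, for $\tau\in(0,1]$, the points in
\[
F(\tau):=\{A_1(\tau),A_2(\tau),P_1,\ldots,P_N\}\subset\mathbb{R}^2
\]
are pairwise distinct, and all depend continuously on $\tau$, with $A_1(0)=A_2(0)=A_0\notin\{P_j\}$.

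Next I would construct a continuous family of closed curves $c(\tau)\subset\mathbb{R}^2\setminus F(\tau)$ with $c(0)=\psi(\gamma)$, each enclosing both $A_1(\tau)$ and $A_2(\tau)$ and avoiding the branch points. To do this I would use a compactness argument: cover $[0,1]$ by finitely many closed subintervals $[\tau_k,\tau_{k+1}]$ on each of which the points of $F(\tau)$ stay inside pairwise-disjoint small disks; on each subinterval an explicit local deformation of $c(\tau_k)$ that tracks the motion of each obstacle is easily written, and the pieces concatenate into a global continuous homotopy.

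Finally I would lift the planar family $c(\tau)$ to $S$. The restriction $\psi: S\setminus\psi^{-1}(\{P_j\})\to \mathbb{R}^2\setminus\{P_j\}$ is an unramified finite-sheeted covering, and since the homotopy $\tau\mapsto c(\tau)$ never crosses a branch point, standard covering-space theory produces a unique continuous lift $\Gamma(\tau)\subset S\setminus\cup_j P_j$ starting from $\Gamma(0)=\gamma$; lifts of homotopies of closed loops remain closed, so each $\Gamma(\tau)$ is a closed contour. By construction $\psi(\Gamma(\tau))=c(\tau)$ misses $A_1(\tau)$ and $A_2(\tau)$, so the hypotheses of Theorem~\ref{th:Cont} are met all along $\sigma$, delivering the analytic continuation up to $B$.

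The main obstacle, as I see it, is in the middle step: the trajectories of $A_1(\tau)$ and $A_2(\tau)$ can be intricate---approaching each other, winding in complicated ways, or passing arbitrarily close to the fixed branch points $P_j$---so producing a genuinely continuous and smooth family $c(\tau)$ requires a careful choice of the subdivision $\{\tau_k\}$ and of the local deformation patches on adjacent subintervals. The underlying topological content is standard planar obstacle avoidance and poses no obstruction in principle, but verifying that the resulting $\Gamma(\tau)$ can always be arranged to be a single smooth simple loop on $S$, of a homotopy class compatible with the closed lift, is the technical part of the argument.
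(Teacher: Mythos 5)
Your proposal is correct and follows essentially the same route as the paper, which in fact declares the proof ``almost obvious, but not easy to be formalised'' and only sketches it: one pushes the movable contour ahead of the moving points $A_1(\tau)$, $A_2(\tau)$, and the two hypotheses $(\sigma\setminus A_0)\cap T=\emptyset$ and $(\sigma\setminus A_0)\cap\mathbb{R}^2=\emptyset$ are exactly what rule out pinching against a $P_j$ or between $A_1$ and $A_2$ --- which is precisely the content of your equivalences $A_\ell(A)=P_j\Leftrightarrow A\in L_\ell^{(j)}$ and $A_1(A)=A_2(A)\Leftrightarrow A\in\mathbb{R}^2$. Your additional step of building the obstacle-avoiding homotopy in the plane and then lifting it through the unramified covering $\psi$ to get closed contours on $S$ is a sound way to supply the formalisation the authors omit.
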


We omit the proof of this theorem. 
It is almost obvious, but not easy to be formalised. 
One should consider the process of changing $\tau$ from 0
to~1, and ``pushing'' the already built contour $\Gamma(\tau)$, 
which is considered to be movable, 
by  the moving points $A_1 (\tau)$
and $A_2 (\tau)$ (or, to be more precise, by small disks
centred at $A_1 (\tau)$ and $A_2 (\tau)$).    

Some issues may occur with such deformation. For example, the contour may 
become pinched between $A_{1,2}(\tau)$ and $P_j$ for some $j$, or 
between $A_1(\tau)$ and $A_2(\tau)$.    
The condition 
\[
(\sigma \setminus A_0) \cap T = \emptyset 
\] 
guarantees that the points $A_1(\tau)$, $A_2 (\tau)$ do not 
pass through $P_j$, and thus the contour $\Gamma(\tau)$ cannot be pinched  
between $A_{1,2}$ and $P_j$. 
The condition 
\[
(\sigma \setminus A_0) \cap \mathbb{R}^2 = \emptyset 
\] 
guarantees that the contour $\Gamma(\tau)$ 
cannot be pinched between $A_1 (\tau)$ and $A_2 (\tau)$. 

Theorems~\ref{th:Cont} and \ref{th:Sing} demonstrate that $u_{\rm c}$ can be expressed  
almost everywhere as an integral (\ref{eq:ContFormula}) containing 
$u(x_1, x_2)$ defined on the real surface $S$, and some known kernel $G$. 
The continuation $u_{\rm c}$ has a more complicated structure 
than $u$ on $S$. This is explained by the fact that the set of 
closed contours on $S$ has a more complicated structure than $S$ itself. This will be investigated in the next section.


\section{Analysis of the analytical continuation branching}
\label{sec:analbranchinganal}

We will now study the branching of $u_{\rm c}$ as follows. 
Let us consider a starting point $A_0 \in (S \setminus \cup_j P_j$), 
an ending point $B$ located near a set on which we study the branching (i.e.\ near 
$\mathbb{R}^2$ or $L_{1,2}^{(j)}$), and a simple path $\sigma$ connecting $A_0$
to $B$ satisfying 
\[
(\sigma \setminus A_0)  \cap (T \cup \mathbb{R}^2) = \emptyset,
\] 
i.e.\ $\sigma$ is such that the condition of Theorem~\ref{th:Sing} is valid. 
According to the Theorems~\ref{th:Cont} and \ref{th:Sing},
one can continue $u_{\rm c}$ from $A_0$ to $B$ along~$\sigma$.
Let the result of this continuation be denoted by $u_{\rm c}(B)$.
Let $\sigma'$ be a local path starting and ending at $B$
and encircling a corresponding fragment of the branch set. For example, one can introduce 
a local transversal variable near the branch set and build a contour $\sigma'$
in the plane of this variable encircling zero for a single time in the positive direction. 
One can consider a continuation of $u_{\rm c} (B)$ along $\sigma'$ and obtain 
a branch of $u_{\rm c}$ that is denoted by $u_{\rm c} (B ; \sigma')$.  
This is a continuation of $u_{\rm c}$ from $A_0$
along the concatenation $\sigma \sigma'$ 
of the contours $\sigma$ and~$\sigma'$. 
Let the parameter $\tau$ parametrise the contour $\sigma \sigma'$; $\tau = 0$
correspond to $A_0$; $\tau = 1/2$ correspond to the end of $\sigma$;  
$\tau = 1$ correspond to the end of $\sigma  \sigma'$.
Consistently with this parametrisation, the contour of integration used to obtain $u_{\rm c} (B)$ and $u_{\rm c} (B ; \sigma')$ will be denoted by $\Gamma(1/2)$ and $\Gamma(1)$ respectively.

If $ u_{\rm c} (B ; \sigma') \equiv u_{\rm c} (B)$,
then the contour $\sigma'$ yields no branching. 
If $u_{\rm c} (B ; (\sigma')^p) \equiv u_{\rm c} (B)$, for some integer $p>1$,
then the branching has order $p$ (the smallest strictly positive integer with such property). Here $u_{\rm c} (B ; (\sigma')^p)$ is the result of 
continuation along the concatenation $\sigma  \sigma' \dots \sigma'$, 
where $\sigma'$ is taken $p$ times. 
The two following theorems establish the type of branching of $u_{\rm c}$.

\begin{theorem}
\label{th:Picard}
The points of the real plane $\mathbb{R}^2$ other than $P_j$ 
do not belong to the branch set of~$u_{\rm c}$. 
\end{theorem}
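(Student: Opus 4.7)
The strategy is to apply Theorem~\ref{th:Cont} with a carefully chosen \emph{fixed} contour $\Gamma$, so that the only possible source of monodromy is the kernel $G$, and then show by direct calculation that $G$ is in fact unchanged after the loop. Fix $Q=(q_1,q_2)\in\mathbb{R}^2\setminus\{P_1,\dots,P_N\}$. Because $Q$ is not a branch point of $S$, we may choose $R>0$ so small that the closed disk $D_R(Q)\subset\mathbb{R}^2$ contains no $P_j$ and meets no cuts of any sheet. Pick a sheet of $S$ above $Q$, let $\tilde A_0\in S$ be the corresponding real lift, and let $\Gamma$ be the anticlockwise lift of $\partial D_R(Q)$ to that sheet. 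By the third defining condition of $u_c$ in Section~\ref{sec:sec3analcont}, every local branch of $u_c$ near $Q$ arises in this way from exactly one sheet, so it is enough to handle the branch attached to this choice.

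To probe monodromy, I introduce for $0<\epsilon\ll R$ the family
\[
A(t)=\bigl(q_1+i\epsilon\cos t,\; q_2+i\epsilon\sin t\bigr),\qquad t\in[0,2\pi],
\]
based at $B=A(0)$, and let $\sigma'$ denote the corresponding loop. This $\sigma'$ generates the local fundamental group $\pi_1(\mathbb{C}^2\setminus\mathbb{R}^2)\cong\mathbb{Z}$ around $Q$, since $\mathbb{C}^2\setminus\mathbb{R}^2$ retracts locally onto a circle in the imaginary directions. A direct evaluation of (\ref{eq:firstrealpoint})--(\ref{eq:secondrealpoint}) gives
\[
A_1(t)=(q_1-\epsilon\sin t,\; q_2+\epsilon\cos t),\qquad A_2(t)=(q_1+\epsilon\sin t,\; q_2-\epsilon\cos t),
\]
both of which remain on the circle of radius $\epsilon$ around $Q$, hence strictly inside the disk enclosed by $\psi(\Gamma)$. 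The hypotheses of Theorem~\ref{th:Cont} are therefore met uniformly in $t$ with the \emph{same} $\Gamma$: this contour serves first to build $u_c(B)$ via a short real-to-complex path from $\tilde A_0$, and then to represent each $u_c(A(t))$ along $\sigma'$.

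It remains to check that the integrand in (\ref{eq:ContFormula}) returns to its initial value after $t$ advances by $2\pi$. Since $\Gamma$ is fixed and $u$, $\partial_{n'}u$ are single-valued on $S$, the only possible source of branching is $G$, whose only singularity is at $r=0$. For $\mathbf{r}'\in\psi(\Gamma)$, writing $a=q_1-x_1'$, $b=q_2-x_2'$ so that $a^2+b^2=R^2$, a short computation yields
\[
r(\mathbf{r}(t),\mathbf{r}')^2=(R^2-\epsilon^2)+2i\epsilon\bigl(a\cos t+b\sin t\bigr),
\]
whose real part is strictly positive because $R>\epsilon$. Hence $t\mapsto r^2$ traces a closed curve confined to the right half-plane, does not encircle the origin, and both $r$ and $H_0^{(1)}(\mathscr{k}r)$, together with their derivatives, return unchanged. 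The integral in (\ref{eq:ContFormula}) is therefore unaffected by $\sigma'$, so $u_c(B;\sigma')=u_c(B)$ and $Q$ is not in the branch set.

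The main subtlety is the bookkeeping that links an arbitrary local branch of $u_c$ near $Q$ to a specific sheet of $S$, so that the lift $\tilde A_0$ and the associated $\Gamma$ can be chosen as above; this is guaranteed by the third defining condition of $u_c$ in Section~\ref{sec:sec3analcont}. The fact that $\sigma'$ captures \emph{all} local monodromy is immediate from $\pi_1(\mathbb{C}^2\setminus\mathbb{R}^2)_{\mathrm{loc}}\cong\mathbb{Z}$, and the core analytical step -- verifying that $\mathrm{Re}(r^2)>0$ along the loop -- is the elementary computation above.
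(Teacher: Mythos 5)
Your computation is correct as far as it goes, but it proves a weaker statement than Theorem~\ref{th:Picard}. What you establish is that the $M$ ``primary'' branches of $u_{\rm c}$ near $Q$ --- those represented by the integral (\ref{eq:ContFormula}) over a fixed, nicely placed circle $\Gamma=\ptl D_R(Q)$ lifted to a single sheet of $S$ --- are unbranched at $\mathbb{R}^2$. For these the argument is clean: $A_{1}(t)$ and $A_{2}(t)$ stay on the $\epsilon$-circle about $Q$, never meet $\psi(\Gamma)$, and your observation that ${\rm Re}\,[r^2]=R^2-\epsilon^2>0$ shows the kernel returns to itself. The gap is in the sentence ``every local branch of $u_c$ near $Q$ arises in this way from exactly one sheet.'' The third defining condition of Section~\ref{sec:sec3analcont} only asserts that $M$ such branches \emph{exist}; it does not assert they are the only ones. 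As the paper stresses, $u_{\rm c}$ has infinitely many branches at a generic point of $\mathbb{C}^2\setminus T$ (for the strip, the basis already has $Q=4>M=2$ elements), and the theorem claims that \emph{none} of them branch at $\mathbb{R}^2\setminus\{P_j\}$. A general branch $u_{\rm c}(B;\sigma)$ is represented by an integral over a contour $\Gamma(1/2)$ produced by an arbitrary prior continuation; such a contour typically has fragments passing between $A_1(B)$ and $A_2(B)$ (e.g.\ any double-eight contour based on $A_1$ necessarily does). For those fragments your fixed-contour argument fails outright: as $t$ advances, $A_1(t)$ and $A_2(t)$ sweep across the contour (indeed $A_2(\pi)=A_1(0)$ in your parametrisation), the representation (\ref{eq:ContFormula}) ceases to be valid unless the contour is deformed, and after a full loop the deformed contour differs from the original by double-eight contours based on $A_1$ and $A_2$.

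The missing ingredient is therefore the heart of the paper's proof: showing that the integral over each such double-eight contour vanishes. This is done by collapsing the double-eight onto four copies of a segment joining $A_1$ and $A_2$, using the Meixner-type cancellation of the logarithmic singularities of $G$ at its endpoints, and then invoking the Hankel identity
\begin{equation*}
-H_0^{(1)}(e^{-i\pi}z)+2H_0^{(1)}(z)-H_0^{(1)}(e^{i\pi}z)=0 .
\end{equation*}
Without this step your argument covers only the branches attached directly to the sheets of $S$, not the full branch set of $u_{\rm c}$, so the theorem as stated is not proved.
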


\begin{proof} 
 
Note that $\mathbb{R}^2$ is not an analytical set, so the concept of a transversal 
variable is not fully applicable to it (and of course this is 
the reason of non-branching at it). Anyway, the scheme sketched above can be applied 
to this case. Let $B$ be close to $\mathbb{R}^2$ but not close to any of the $P_j$.   
Consider a local path $\sigma'$ starting and ending at $B$ and encircling $\mathbb{R}^2$. 
Consider the motion of $A_1(A)$ and $A_2(A)$ as the point $A$ travels along~$\sigma'$. To be consistent with the parametrisation, we will refer to these points as $A_{1,2}(\tau)$ for $\tau\in[1/2,1]$. These points are close to eachother and close to $B$ since $B$ is close to 
$\mathbb{R}^2$.
Depending on the particular contour $\sigma'$,
it may happen that during this motion $A_2$ encircles $A_1$ a single time, 
several times or not at all.
Note that $A_1(1/2) = A_1 (1)=A_1(B)$ and $A_2(1/2) = A_2 (1)=A_2(B)$.  
 
One can show that 
if $A_2$ does not encircle $A_1$ then the contour $\Gamma(1)$ is homotopic 
to $\Gamma(1/2)$, and thus $u_{\rm c}(B ; \sigma') = u_{\rm c}(B)$. 

Let $A_2$ encircle $A_1$ a single time. 
The case of several times follows from this case 
in a clear way. To consider this case, we follow the principles of computation of ramified integrals that can be found for example in \cite{Pham2011, Sternin1994,Savin2017}. Namely, it is known  
that the integral changes locally during a local bypass. The 
fragments of the contour $\Gamma(1/2)$ that are located far from the points $A_1(1/2)$ and 
$A_2(1/2)$ are not affected by the bypass $\sigma'$. The fragments that are close to these 
points but do not pass between these points are not affected either. The only parts that are 
affected are the parts of $\Gamma(1/2)$ passing between the points $A_{1}(1/2)$ and $A_2 (1/2)$.   

In Figure~\ref{fig02} we demonstrate the deformation of a fragment of $\Gamma(1/2)$ 
(Figure~\ref{fig02}{\color{myred}a} into a corresponding fragment of $\Gamma (1)$ (Figure~\ref{fig02}{\color{myred}b}). 
One can see from Figure~\ref{fig02}{\color{myred}c} that the fragment of $\Gamma(1)$  
can be written as the sum of the initial fragment of $\Gamma(1/2)$ and 
an additional contour $\delta \Gamma$. 
The contour $\delta \Gamma$ is a ``double-eight'' contour having the
following property: it bypasses each of the points $A_1(1/2)$ and 
$A_2(1/2) $ 
zero times totally (one time in the positive and one time in the negative direction). 
We say that this double-eight contour is based on the points $A_1(1/2)$ and $A_2(1/2)$. Such contour is also known as a Pochhammer contour. 

\begin{figure}[ht]
\centering
\includegraphics[width=0.8\textwidth]{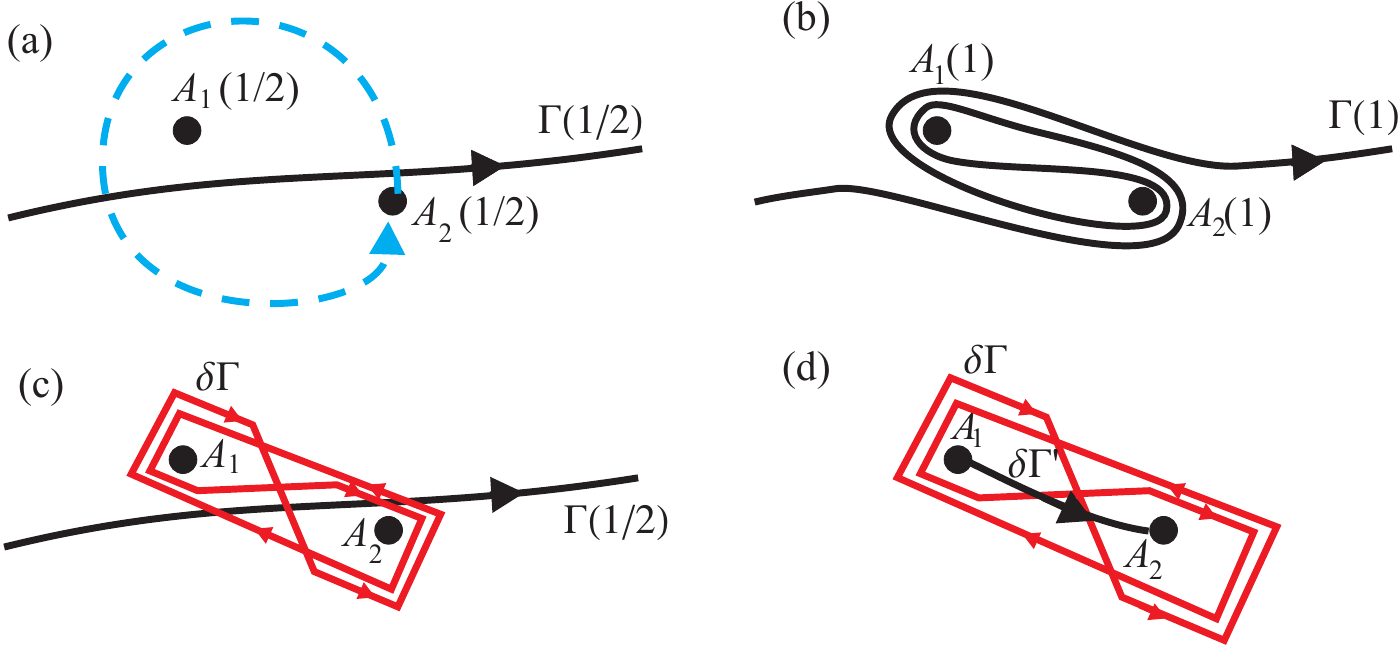}
\caption{Transformation of a fragment of the integration contour $\Gamma$ as
$A_2$ encircles $A_1$}
\label{fig02}
\end{figure}
 
Since there may be several parts of $\Gamma(1/2)$ passing between $A_1(1/2)$
and $A_2(1/2)$, the contour $\Gamma(1)$ can be written as a sum of $\Gamma(1/2)$ and several 
double-eight contours based on the points $A_1(1/2)$ and $A_2(1/2)$.
These contours, possibly, drawn on different sheets of $S$. 

Let us show that for a double-eight contour $\delta\Gamma$ based on $A_1(1/2)$ and $A_2(1/2)$, we have
\begin{equation}
   \int_{\delta \Gamma} \left[ \frac{\ptl G}{\ptl n'}
  ({\bf r}, {\bf r}') u ({\bf r}') - 
    \frac{\ptl u}{\ptl n'} ({\bf r}') G ({\bf r}, {\bf r}') \right]
  dl' = 0 .
  \label{eq:dGamma}
\end{equation}
This will yield that $u_{\rm c} (B; \sigma') = u_{\rm c} (B)$ 
and that there is no branching on 
$\mathbb{R}^2$. 
In order to do so, connect the points $A_1(1/2)$ and
$A_2(1/2)$ with an oriented contour $\delta \Gamma'$ (see Figure~\ref{fig02}{\color{myred}d}, 
$\delta \Gamma'$ is shown by a black line).
Squeeze the contour $\delta \Gamma$ in such a way that it goes $4$ times 
along $\delta \Gamma'$. 

 We now claim that it is not necessary to account for the integral in the close vicinity of $A_1(1/2)$ or $A_2(1/2)$. Indeed, using (\ref{eq:Green}), (\ref{eq:firstrealpoint}) and (\ref{eq:secondrealpoint}), one can show that the kernel $G$ has two logarithmic branch points at $A_1(1/2)$ and $A_2(1/2)$, and hence, in principle, the contour $\delta\Gamma$ needs to be considered on the Sommerfeld surface of $G$. Such surface has infinitely many sheets, and can be constructed by considering a straight cut between $A_1(1/2)$ and $A_2(1/2)$ and suitable ``gluing''. What is important is that locally, around $A_1(1/2)$ and $A_2(1/2)$, $G$ behaves like a complex logarithm. Hence, locally, the difference in $G$ between two adjacent sheets is constant. Therefore, when considering the part of $\delta\Gamma$ in the vicinity of $A_1(1/2)$ and $A_2(1/2)$, two loops going in opposite directions, the overall contribution of the integral tends to zero as these loops ``shrink'' to $A_{1,2}(1/2)$. This is due to the fact that, $u$ being single valued on $\delta \Gamma$, the logarithmic singularities of $G$ cancel out. Thus, a consideration of the 
	integral near $A_1(1/2)$ and $A_2(1/2)$ is not necessary, and it is possible to reduce $\delta \Gamma$ to four copies of $\delta \Gamma'$.
 
As explained in detail in Appendix \ref{app:bypasshankel}, according to the formulae (\ref{eq:firstrealpoint})-(\ref{eq:secondrealpoint}) linking the complex variables 
$(x_1 , x_2)$ with the real coordinates of the points $A_1 , A_2$,
and to formulae (\ref{eq:Green}), a bypass about $A_1$ in the positive direction (in the real plane)
leads to the following change of the argument of $H_0^{(1)} (z)$:
$z \to e^{i \pi} z$.
Similarly, a bypass about $A_2$ in the positive direction changes 
the argument of the Hankel function as  $z \to e^{-i \pi} z$.

As a result, one can rewrite the integral of (\ref{eq:dGamma})
as 
\begin{equation} 
 \int_{\delta \Gamma} \left[ \frac{\ptl G}{\ptl n'}
  ({\bf r}, {\bf r}') u ({\bf r}') - 
    \frac{\ptl u}{\ptl n'} ({\bf r}') G ({\bf r}, {\bf r}') \right]
  dl' =
\int_{\delta \Gamma'} \left[ \frac{\ptl G'}{\ptl n'}
  ({\bf r}, {\bf r}') u ({\bf r}') - 
    \frac{\ptl u}{\ptl n'} ({\bf r}') G' ({\bf r}, {\bf r}') \right]
  dl' ,
  \label{eq:dGammaPrime} 
\end{equation}
where 
\begin{equation}
G' ({\bf r} , {\bf r}') = - \frac{i}{4} \left( 
-H_0^{(1)}(e^{- i \pi}\mathscr{k}\, r({\bf r} , {\bf r}') )+ 
2 H_0^{(1)}(\mathscr{k}\, r({\bf r} , {\bf r}'))
- H_0^{(1)}(e^{ i \pi} \mathscr{k}\, r({\bf r} , {\bf r}') )
\right).
\label{eq:GP}
\end{equation}

The notation $H_0^{(1)} (e^{i \pi} z)$ means the value of $H_0^{(1)} (\cdot)$
obtained as the result of continuous rotation of the argument $z$ about the origin
for the angle $\pi$ in the positive direction.   

Apply the formula (see e.g.\ \cite{DSJones1964}, sec.\ 1.33, eqs (202)--(203)) well-known from the theory of Bessel functions :
\begin{equation}
-H_0^{(1)}(e^{- i \pi} z)+ 
2 H_0^{(1)}(z)
- H_0^{(1)}(e^{ i \pi} z )
= 0 ,
\label{eq:HankelId}
\end{equation} 
to conclude that
the expression on the right-hand side of (\ref{eq:dGammaPrime}) is equal to zero and conclude the proof. Here and below,  
the formula (\ref{eq:HankelId}) seems to play a fundamental role in the process 
of analytical continuation.    
 \end{proof}

	Note also that every branch of $u_c$ is continuous at any point $A_0 \in \mathbb{R}^2\setminus{\cup_jP_j}$, thus these points 
	are regular points of any branch of $u_{\rm c}$. This can be seen using the complexified Green's theorem of Appendix \ref{app:Greens}. Indeed, using the same idea of variable translation, it can be shown that for a point $A\in\mathbb{C}^2\setminus(T\cup\mathbb{R}^2)$ in a close complex neighbourhood of $A_0$ and a given branch $u_c(A)$, we can write $u_c(A)=\int_{\tilde{\gamma}} (u_c\nabla_CG-G\nabla_Cu_c)$ for some small contour $\tilde{\gamma}$ surrounding $A$ in $\mathbb{C}^2$. Since the differential form $u_c\nabla_CG-G\nabla_Cu_c$ is closed (see Appendix \ref{app:Greens}), Stokes' theorem allows us to deform $\tilde{\gamma}$ to a contour $\gamma$ in $\mathbb{R}^2$ surrounding $A_0$ (see Figure \ref{fig:appCfig}) without changing the value of the integral. $A$ is chosen close enough to $A_0$ so that we can let $A\to A_0$ along a simple small straight path without any singularities of the integrand hitting the contour $\gamma$, showing, in doing so, the continuity of $u_c$ at $A_0$.


 \begin{theorem}
 	\label{th:Picard2}
 	Let $P_j \in \mathbb{R}^2$ be a branch point of order $p$ (the definition 
 		of the order of the branch point is clarified below). Then 
 		both $L_1^{(j)}$ and $L_2^{(j)}$ are branch 2-lines of $u_{\rm c}$ of 
 		order~$p$. 
 \end{theorem}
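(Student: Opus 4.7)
The plan is to adapt the contour-deformation argument of Theorem~\ref{th:Picard} to the present situation, with the order of the branch point $P_j$ and the Meixner condition closing the argument. I focus on $L_1^{(j)}$; the case of $L_2^{(j)}$ follows symmetrically upon exchanging $A_1 \leftrightarrow A_2$ and the transversal variables $z_1 \leftrightarrow z_2$.

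First I would pick a point $B \in \mathbb{C}^2 \setminus (T \cup \mathbb{R}^2)$ in a small neighbourhood of $L_1^{(j)}$ and away from every other 2-line, together with a local bypass $\sigma'$ of $L_1^{(j)}$ at $B$ that winds once about $0$ in the transversal variable $z_1 = x_1 + i x_2 - (X_1^{(j)} + i X_2^{(j)})$. The crucial geometric observation is that, as $A(\tau)$ traces $\sigma'$, only $A_1(\tau)$ moves: via the identification of $A_1$ with the complex number $x_1 + i x_2$ implicit in (\ref{eq:firstrealpoint}), a loop of $z_1$ about $0$ becomes exactly a loop of $A_1$ about $P_j$ in $\mathbb{R}^2$, while $A_2(\tau)$ remains fixed since the tangential coordinate $z_2 = x_1 - i x_2$ is unchanged along $\sigma'$.

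Next, following the contour-deformation reasoning used in Theorem~\ref{th:Picard}, I would argue that $\Gamma(\tau)$ must be deformed continuously so that its projection avoids $A_1(\tau)$. A full revolution of $A_1$ about $P_j$ forces $\Gamma$ to acquire a small extra loop around $P_j$, lifted by continuity onto a specific sheet of $S$. Iterating $\sigma'$ a total of $n$ times yields $\Gamma(n) = \Gamma(1/2) + \sum_{k=1}^{n} \delta\Gamma^{(k)}$, where each successive loop $\delta\Gamma^{(k)}$ sits on the sheet obtained from the previous one by crossing a branch cut at $P_j$. Since $P_j$ is a branch point of order $p$, exactly $p$ such passes are required before the accumulated chain $\sum_{k=1}^{p} \delta\Gamma^{(k)}$ closes into a loop on $S$ that is contractible, through a small disc $D \subset S$, onto $P_j$.

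On $D$, Green's identity together with the fact that both $u$ and $G$ obey the Helmholtz equation with the same wavenumber makes the associated surface integral vanish, while the Meixner condition at $P_j$ rules out any distributional contribution from the apex. Hence the integral over the closed chain vanishes and $u_{\rm c}(B;(\sigma')^p) = u_{\rm c}(B)$, showing that the order of branching of $L_1^{(j)}$ divides $p$. Conversely, for $p' < p$, the partial chain $\sum_{k=1}^{p'} \delta\Gamma^{(k)}$ does not close on $S$ and the corresponding integral is generically non-zero, since $u$ takes distinct values on the $p$ sheets meeting at $P_j$; this pins the order down to exactly $p$. The main technical difficulty will lie in formalising the claim that consecutive $\delta\Gamma^{(k)}$ lift onto consecutive sheets of the local $p$-fold cover at $P_j$, and that the full concatenation after $p$ passes is a genuinely contractible loop on $S$ rather than merely a projection-closed contour; a secondary subtlety, required for the minimality assertion, is the generic non-degeneracy of the monodromy of $u$ near $P_j$.
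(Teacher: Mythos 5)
Your setup is right — the choice of $B$, the observation that along $\sigma'$ only $A_1$ moves (one positive turn about $P_j$) while $A_2$ stays put, and the fact that the conclusion must come from showing that the accumulated variation of the integration contour after $p$ bypasses contributes nothing to the Green's integral. But the central geometric claim is wrong, and the error propagates into the final step. The variation of $\Gamma$ produced by one revolution of $A_1$ about $P_j$ is \emph{not} a small loop around $P_j$ lifted to some sheet: it is a double-eight (Pochhammer) contour based on the \emph{pair} of points $A_1$ and $P_j$, with total winding number zero about each, exactly as in the $A_2$-encircles-$A_1$ analysis of Theorem~\ref{th:Picard} (Figure~\ref{fig02}). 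The reason the variation must involve $A_1$ is that the kernel $G(\mathbf{r},\cdot)$ has a logarithmic branch point at $A_1$, so the fragments of $\Gamma$ that get dragged are precisely those pinched between $A_1$ and $P_j$, and the resulting difference contour necessarily wraps around both points. Consequently your closing move — contracting the accumulated chain after $p$ passes onto a small disc $D$ about $P_j$ and invoking Green's identity plus Meixner — fails: the chain cannot be contracted through $A_1$, where the integrand is singular, and the region it bounds is not one on which $u$ and $G$ are both regular single-valued solutions.

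What actually kills the integral over $\delta\Gamma + \delta\Gamma^{(1)} + \dots + \delta\Gamma^{(p-1)}$ is a three-part decomposition of each double-eight into circles about $P_j$ (which vanish by Meixner, as you say), circles about $A_1$ (which vanish because the logarithmic singularities of $G$ on adjacent branches differ locally by a constant and cancel against the single-valuedness of $u$, after grouping the $p$ copies so that $u$ visits each local sheet exactly once), and the four straight segments joining $A_1$ to $P_j$, whose cancellation rests essentially on the Hankel identity $-H_0^{(1)}(e^{-i\pi}z) + 2H_0^{(1)}(z) - H_0^{(1)}(e^{i\pi}z) = 0$ together with the fact that after $p$ bypasses the sum $u(\cdot,s_1)+\dots+u(\cdot,s_p)$ appears with opposite signs on the outer and inner segment pairs. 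None of this is reachable from a Green's-identity-on-a-disc argument; the Hankel identity is the irreplaceable ingredient your proposal is missing. (Your remarks on minimality of $p$ are a reasonable supplement, but they do not repair the main gap.)
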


\begin{proof}
First, let us define what it exactly means for $P_j \in \mathbb{R}^2$ to be a branch point 
of $S$ of order~$p$.
The situation is simple if $\psi^{-1} (P_j)$ is a single point. Then $p$ is the order of branching of $S$ at $\psi^{-1}(P_j)$. If $\psi^{-1} (P_j)$ is a set of several points, 
then $p$ is the least common multiple of orders of all points $\psi^{-1} (P_j)$. 
A bypass encircling $P_j$ $p$ times in the positive direction returns each 
point located closely to $P_j$ to itself. 

Consider a point $B\in\mathbb{C}^2\setminus T$ that is close to $L_1^{(j)}$. Let the loop $\sigma'$ bypass $L_1^{(j)}$ once in the positive direction 
(i.e.\ its projection onto the local transversal coordinate
\[
z_1 = x_1 + i x_2 - (X_1^{(j)} + i X_2^{(j)})
\]
bypasses zero once in the positive direction). 
Let $\sigma'$ be also such that it does not bypass any other 2-line of~$T$. As before, parametrise $\sigma'$ by $\tau\in[1/2,1]$ and denote by $\Gamma(\tau)$ the associated integration contours used for the analytical continuation along $\sigma'$ . Then $A_1(\tau)$ bypasses $P_j$ once in the positive direction, 
and $A_2(\tau)$ does not bypass any branch point. 

The fragments 
of $\Gamma(1/2)$ that are far from $A_1 (1/2)$ or not passing between 
$A_1 (1/2)$ and $P_j$ are not affected by~$\sigma'$.
For each fragment of $\Gamma (1/2)$ passing between 
$A_1 (1/2)$ and $P_j$,
the double-eight contour $\delta \Gamma$ is added to obtain a fragment of~$\Gamma (1)$. 
In this case, the double-eight contour is based on the points 
$A_1 (1/2)$ and~$P_j$. The graphical proof is similar to that in Figure~\ref{fig02}.  

Let us now show that the order of branching at $L_1^{(j)}$ is equal to~$p$.
Consider a fragment of $\Gamma(1/2)$ passing between $A_1$ and $P_j$.
A single bypass along $\sigma'$ changes (locally)
\[
\Gamma(1/2) \stackrel{\sigma'}{\longrightarrow} \Gamma(1/2) + \delta \Gamma,
\]
where $\delta \Gamma$ is illustrated in Figure \ref{fig04}. 
\begin{figure}[ht]
	\centering
	\includegraphics{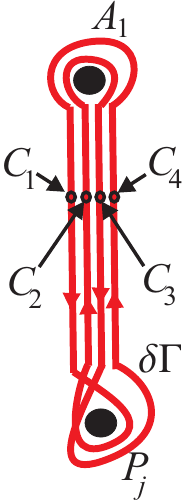}
	\caption{Illustration of $\delta \Gamma$ and the points of application of formula (\ref{eq:HankelId}) to $\delta \Gamma$}
	\label{fig04}
\end{figure}
Upon performing a second bypass $\sigma'$, we hence get 
\[
\Gamma(1/2) + \delta \Gamma
\stackrel{\sigma'}{\longrightarrow} \Gamma(1/2) + \delta \Gamma + \delta \Gamma^{(1)},
\]
where $\delta \Gamma^{(1)}$ is the double-eight contour 
based on the points $A_1$ and $P_j$, and
obtained from $\delta \Gamma$ by rotating $A_1$ about $P_j$ once in the positive direction. 

The projection  $\psi(\delta \Gamma^{(1)})$ coincides with 
$\psi (\delta \Gamma)$, however, the new contour passes along other sheets of 
$S$, and the branch of $H_0^{(1)}$ is different on it. 
Finally, after $p$ rotations one gets the integration contour 
$\Gamma(1/2) + \delta \Gamma + \delta \Gamma^{(1)} + \dots +\delta \Gamma^{(p-1)}$.
In order to show that the branching of $L_1^{(j)}$ is of order $p$, we need to show that that 
	\begin{equation} 
	\int_{\delta \Gamma + \delta \Gamma^{(1)} + \dots + \delta \Gamma^{(p-1)}} 
	\left[ \frac{\ptl G}{\ptl n'}
	({\bf r}, {\bf r}') u ({\bf r}') - 
	\frac{\ptl u}{\ptl n'} ({\bf r}') G ({\bf r}, {\bf r}') \right]
	dl' = 0 ,
	\label{eq:dGammaPrime1} 
	\end{equation}
implying that $u_{\rm c} (B ; (\sigma')^p) = u_{\rm c}(B)$, and that the branching has order~$p$. Algebraically, in terms of homology classes, this can be written as  
\begin{equation}
\delta \Gamma + \delta \Gamma^{(1)} + \dots + \delta \Gamma^{(p-1)} = 0 .
\label{eq:ContourEq}
\end{equation}

We will show this by decomposing $\delta \Gamma$ and all its subsequent ``copies'' into three main parts: (i) the two circles around $P_j$; (ii) the two circles around $A_1$ and (iii) the four straight lines between $A_1$ and $P_j$, and showing that the contribution of each of these three parts to the integral (\ref{eq:dGammaPrime1}) is indeed zero.
\begin{enumerate}[label=(\roman*)]

\item By Meixner conditions, $u$ and $\partial_n'u$ are integrable in the close vicinity of $P_j$, where $G({\bf r}, \cdot)$ is well behaved. Hence we can ``shrink'' the circles to $P_j$, and the integrals over the two circles become zero and do not contribute to (\ref{eq:dGammaPrime1}).

\item For the circles around $A_1$, the argument is slightly more subtle. Let $(s_1,\cdots,s_p)$ be the sheets of $S$ accessible by turning around $P_j$. Whatever sheet $u$ is on the initial outer circle, after $p$ rotation, it will have ``visited'' all sheets $s_1,\cdots,s_p$ once. The same is true for the inner circle. Hence, the overall contribution of the circles to (\ref{eq:dGammaPrime1}) can be written as
\begin{align}
&\int_{A_1} \left\{ \left[ \frac{\ptl G}{\ptl n'}
({\bf r}, {\bf r}';h_1) u ({\bf r}';s_1) - 
\frac{\ptl u}{\ptl n'} ({\bf r}';s_1) G ({\bf r}, {\bf r}';h_1) \right] \right.\nonumber\\
&\left. +\cdots+ \left[ \frac{\ptl G}{\ptl n'}
({\bf r}, {\bf r}';h_p) u ({\bf r}';s_p) - 
\frac{\ptl u}{\ptl n'} ({\bf r}';s_p) G ({\bf r}, {\bf r}';h_p) \right] \right\} \,dl'\nonumber\\
-&\int_{A_1} \left\{ \left[ \frac{\ptl G}{\ptl n'}
({\bf r}, {\bf r}';h_1') u ({\bf r}';s_1) - 
\frac{\ptl u}{\ptl n'} ({\bf r}';s_1) G ({\bf r}, {\bf r}';h_1') \right] \right.\nonumber\\
&\left. +\cdots+ \left[ \frac{\ptl G}{\ptl n'}
({\bf r}, {\bf r}';h_p') u ({\bf r}';s_p) - 
\frac{\ptl u}{\ptl n'} ({\bf r}';s_p) G ({\bf r}, {\bf r}';h_p') \right] \right\} \,dl',
\end{align} 
for some $(h_1,\cdots,h_p)$ and $(h_1',\cdots,h_p')$ corresponding to given sheets of $G({\bf r},\cdot)$, where the $;$ notation in the argument of a functions specifies which sheet it is on, and where $\int_{A_1}$ specifies integration along a small circle encircling $A_1$. Now we can group the terms for which $u$ lies on the same sheet. For each such pair, we can use the reasoning used below (\ref{eq:dGamma}) in the proof of Theorem \ref{th:Picard} to show that even if $G$ may be on a different sheet for each element of the pair, the logarithmic singularities cancel out. Hence one can safely ``shrink'' the circles to $A_1$ without leading to any contribution to (\ref{eq:dGammaPrime1}).

\item Finally, for the straight lines, one can check that 
due to the identity  (\ref{eq:HankelId}), 
the branch of $H_0^{(1)}$ (and hence of $G$) does not matter. Namely, let $C_1, C_2, C_3, C_4$ 
be some points of $\delta \Gamma$ projected onto a single point $C \in \mathbb{R}^2$.
These points are shown in Figure~\ref{fig04}, but for clarity they are shown close to each other, 
not above one another.

Consider the points $C_1$ and $C_4$. They belong to the same sheet of $S$, 
but the branches of $H_0^{(1)}$ are different at these points. Namely, if the argument of 
$H_0^{(1)}$ is equal to $z$ at $C_1$, then it is equal to $z e^{-i \pi}$ at $C_4$ (see again Appendix \ref{app:bypasshankel}).
Since the contours have different directions at $C_1$ and $C_4$, the contribution of the two outer lines to the integral over $\delta \Gamma$ is of the form
\begin{align}
\int_{\delta \Gamma'} \left[ \frac{\ptl G'}{\ptl n'}
({\bf r}, {\bf r}') u ({\bf r}';s) - 
\frac{\ptl u}{\ptl n'} ({\bf r}';s) G' ({\bf r}, {\bf r}') \right] \, dl',
\end{align}
for some $s\in(s_1,\cdots,s_p)$, where $G' ({\bf r}, {\bf r}')=-\tfrac{i}{4}(H_0^{(1)} (\mathscr{k}r({\bf r}, {\bf r}')) - H_0^{(1)} (e^{- i \pi}\mathscr{k}r({\bf r}, {\bf r}'))$ and where $\delta\Gamma'$ is a straight line between $A_1$ and $P_j$.

Perform a bypass $\sigma'$. In the course of this bypass, the points $C_1$ and $C_4$
encircle $A_1$ in the positive direction. Thus, the corresponding part of the integral 
along $\delta \Gamma^{(1)}$ contains 
$H_0^{(1)} (z e^{i \pi}) - H_0^{(1)} (z )$ and its derivative. 
However, due to (\ref{eq:HankelId}), 
\begin{equation}
 H_0^{(1)} (z e^{i \pi}) - H_0^{(1)} (z ) = 
 H_0^{(1)} (z) - H_0^{(1)} (z e^{- i \pi}),
\label{eq:HankelId1}
\end{equation}
and hence the value of $G'$ is unaffected by such bypass.
As before, after $p$ bypasses, $u$ would have ``visited'' all the sheets $(s_1,\cdots,s_p)$ and the overall contribution of the two outer lines to (\ref{eq:dGammaPrime1}) can be written
\begin{align}
\int_{\delta \Gamma'} \left[ \frac{\ptl G'}{\ptl n'}
({\bf r}, {\bf r}') u' ({\bf r}') - 
\frac{\ptl u'}{\ptl n'} ({\bf r}') G' ({\bf r}, {\bf r}') \right] \, dl',
\label{eq:contrib59}
\end{align} 
where $u'(\cdot)=u(\cdot,s_1)+\cdots+u(\cdot,s_p)$.

The same consideration can be conducted for the points $C_2$ and~$C_3$ on the inner lines to show an overall contribution equal to minus that of (\ref{eq:contrib59}). Hence the overall contribution of the straight lines to (\ref{eq:dGammaPrime1}) is also zero.

\end{enumerate}

 We have therefore proved that the equality (\ref{eq:dGammaPrime1}) is correct, and hence that $L_1^{(j)}$ is a branch 2-line of order $p$.
The case of $\sigma'$ bypassing $L_2^{(j)}$ can be considered in a similar way. Note that this time, if $\sigma'$ bypasses $L_2^{(j)}$ once in the positive direction (and no other branch 2-line), then the corresponding $A_2 (\tau)$ bypasses $P_j$ once in the negative direction, and $A_1 (\tau)$ does not bypass any branch point.
\end{proof} 

\begin{theorem}
\label{the:PrincipleAC}
Consider two points $A, B \in (\mathbb{C}^2 \setminus T)$, and let $\sigma_1$ and $\sigma_2$ be 
piecewise-smooth paths in $(\mathbb{C}^2 \setminus T)$
connecting $A$  with~$B$. Assume that it is possible 
to continue homotopically $\sigma_1$ to $\sigma_2$ in $(\mathbb{C}^2 \setminus T)$.
Let $u_{\rm c}(A)$ be some branch of $u_{\rm c}$ in some neighbourhood
of~$A$. Then the branches of $u_{\rm c}(B)$ obtained by continuation of $u_{\rm c}(A)$ along 
$\sigma_1$ and along $\sigma_2$ coincide:
\[
u_{\rm c}(B; \sigma_1) = u_{\rm c}(B ; \sigma_2). 
\]   
\end{theorem}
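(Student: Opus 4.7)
The plan is to prove this by a standard monodromy-type argument, with the local analytic structure established in Theorems \ref{th:Cont}, \ref{th:Sing}, \ref{th:Picard} and \ref{th:Picard2} providing the crucial input. The central observation is that every point $Q \in \mathbb{C}^2 \setminus T$ admits an open neighbourhood $V_Q \subset \mathbb{C}^2 \setminus T$ on which any germ of $u_{\rm c}$ extends to a single-valued holomorphic function: for $Q$ sufficiently close to some point of origin, the integral representation \eqref{eq:ContFormula} can be used with a fixed contour $\Gamma$, since shrinking $V_Q$ keeps the moving real points $A_1(A), A_2(A)$ inside a small region that $\Gamma$ avoids. On such a neighbourhood, the identity theorem for holomorphic functions of several complex variables then guarantees that any two analytic continuations of the same germ along paths lying in $V_Q$ coincide.

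Given this local single-valuedness, I would introduce a continuous homotopy $H:[0,1]^2 \to \mathbb{C}^2 \setminus T$ with $H(\tau,0)=\sigma_1(\tau)$, $H(\tau,1)=\sigma_2(\tau)$, $H(0,s)=A$, $H(1,s)=B$, extract a Lebesgue number $\delta>0$ from the open cover $\{V_Q\}_{Q \in H([0,1]^2)}$ of the compact image, and subdivide $[0,1]^2$ into a rectangular grid so fine that each cell's image has diameter less than $\delta$. An inductive argument across the grid then shows that the germ obtained at $B$ via continuation along the bottom edge $s=0$ agrees with the one obtained along the top edge $s=1$: inside each cell, the four vertex germs are unambiguously related through the single-valued local branch on the enclosing $V_Q$, so moving from one row to the next preserves the terminal germ. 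Chaining these cell-by-cell equalities yields $u_{\rm c}(B;\sigma_1) = u_{\rm c}(B;\sigma_2)$.

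The main obstacle I anticipate is verifying the uniform local single-valuedness of $u_{\rm c}$ on small enough neighbourhoods of every $Q \in \mathbb{C}^2 \setminus T$, since different types of points require different arguments. For $Q$ strictly away from $\mathbb{R}^2$ and from every $L_{1,2}^{(j)}$, Theorem \ref{th:Cont} applies directly with a fixed contour. For $Q \in \mathbb{R}^2 \setminus \cup_j P_j$, one invokes Theorem \ref{th:Picard} together with the complexified Green's representation discussed after that theorem. For $Q$ near, but not on, a branch 2-line $L_{1,2}^{(j)}$, a sufficiently small open ball around $Q$ does not meet $T$, hence is simply connected in $\mathbb{C}^2\setminus T$, so Theorem \ref{th:Picard2} together with the characterisation of the branching order forbids any non-trivial monodromy within that ball. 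Once this uniform local structure is secured over the compact image of $H$, the grid-refinement monodromy argument outlined above runs to completion in the standard way.
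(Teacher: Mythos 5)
Your argument is correct and is essentially the paper's approach: the paper gives no proof at all, merely asserting that the theorem ``follows naturally from the principle of analytical continuation, which remains valid in 2D complex analysis'', i.e.\ it invokes the standard monodromy theorem. Your grid-refinement homotopy argument, together with the verification of local single-valuedness via the fixed-contour integral representation (and Theorem~\ref{th:Picard} with the complexified Green's identity near $\mathbb{R}^2$), is precisely the standard proof of that principle, so you have simply supplied the details the paper omits.
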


This theorem follows naturally from the principle of analytical continuation, 
which remains valid in 2D complex analysis. In the following theorem, we show that any closed contour in $\mathbb{C}^2\setminus T$ can be deformed to a concatenation of \textit{canonical building blocks}.

\begin{theorem}
\label{th:ContourRep}
Let $A$ be a point in $\mathbb{C}^2 \setminus T$, and let $\sigma$ be a closed 
path in $\mathbb{C}^2 \setminus T$ starting and ending at $A$. Then, within $\mathbb{C}^2 \setminus T$,  $\sigma$ can be homotopically transformed into a concatenation contour $\sigma'$ of the form 
\begin{equation}
\sigma' = \sigma_{\alpha_1}^{m_1} \sigma_{\alpha_2}^{m_2} \dots \sigma_{\alpha_H}^{m_H},
\label{eq:Contour1}
\end{equation}
for some positive integer $H$. For $h\in\{1,\dots,H\}$, the powers $m_h$ belong to $\mathbb{Z}$ and 
the elementary contours $\sigma_{\alpha_h}$
are fixed for each $\alpha_h$. 
The indices $\alpha_h$ are pairs of the type $(\ell,j)$, where $\ell \in \{ 1,2\}$ and  
$j \in \{1, \dots, N \}$.
These contours can themselves be represented as 
concatenations
\begin{equation}
\sigma_{\alpha_h} = \gamma_{\alpha_h} \sigma^*_{\alpha_h} 
\gamma^{-1}_{\alpha_h} ,
\label{eq:Contour2}
\end{equation} 
where each contour $\gamma_{\alpha_h}$ goes from $A$ to some point near $L_{\ell}^{(j)}$,
and $\sigma^*_{\alpha_h}$ is a local contour encircling $L_{\ell}^{(j)}$ once. 
On each contour $\sigma_{1,j}$ the value $x_1 - ix_2$ is constant. 
On each contour $\sigma_{2,j}$ the value $x_1 + ix_2$ is constant. 
\end{theorem}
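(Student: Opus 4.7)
The plan is to exploit the fact that the branch set $T$ acquires a product structure in the characteristic coordinates of the Helmholtz equation, and then to reduce the statement to the classical description of the fundamental group of a punctured plane.

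First I would pass to the characteristic variables $z_1 = x_1 + i x_2$ and $z_2 = x_1 - i x_2$, which give a global biholomorphism $\mathbb{C}^2 \cong \mathbb{C}_{z_1} \times \mathbb{C}_{z_2}$. By \eqref{eq:branch2lineL1j}--\eqref{eq:branch2lineL2j}, the 2-line $L_1^{(j)}$ becomes $\{z_1 = \zeta_j\}$ and $L_2^{(j)}$ becomes $\{z_2 = \bar{\zeta}_j\}$, where $\zeta_j = X_1^{(j)} + i X_2^{(j)}$ and $\bar{\zeta}_j = X_1^{(j)} - i X_2^{(j)}$. Setting $\mathcal{Z}_1 = \{\zeta_1, \dots, \zeta_N\}$ and $\mathcal{Z}_2 = \{\bar{\zeta}_1, \dots, \bar{\zeta}_N\}$, one obtains
\[
\mathbb{C}^2 \setminus T \;\cong\; (\mathbb{C} \setminus \mathcal{Z}_1) \times (\mathbb{C} \setminus \mathcal{Z}_2),
\]
so the problem is reduced to loops in a product of two $N$-punctured planes.

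Second, I would separate $\sigma$ into two factor loops, one per coordinate. Writing $A = (z_1^A, z_2^A)$ and $\sigma(\tau) = (z_1(\tau), z_2(\tau))$ for $\tau \in [0, 1]$, consider the map $h(\tau, s) = (z_1(f_s(\tau)), z_2(g_s(\tau)))$, where $f_s(\tau) = (1-s)\tau + s \min(2\tau, 1)$ and $g_s(\tau) = (1-s)\tau + s \max(2\tau - 1, 0)$ are piecewise-linear reparametrisations of $[0, 1]$; they interpolate between the identity at $s = 0$ and, at $s = 1$, the function $f_1$ which sweeps $[0, 1]$ on $[0, 1/2]$ and is $1$ afterwards, and $g_1$ which is $0$ on $[0, 1/2]$ and sweeps $[0, 1]$ on $[1/2, 1]$. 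Because $z_1(\cdot)$ avoids $\mathcal{Z}_1$ and $z_2(\cdot)$ avoids $\mathcal{Z}_2$ on all of $[0,1]$ while $f_s, g_s$ take values in $[0, 1]$, the whole homotopy stays in $\mathbb{C}^2 \setminus T$. Hence $\sigma$ is homotopic in $\mathbb{C}^2 \setminus T$ to the concatenation $\sigma^{(1)} \sigma^{(2)}$, where $\sigma^{(1)}(\tau) = (z_1(\tau), z_2^A)$ lies in the slice $\{z_2 = z_2^A\}$ and $\sigma^{(2)}(\tau) = (z_1^A, z_2(\tau))$ lies in $\{z_1 = z_1^A\}$.

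Third, each factor loop lives in a complex line with $N$ punctures, whose fundamental group is the free group on $N$ generators, the generators being small lassos around each puncture. Therefore $\sigma^{(1)}$ is homotopic, within its slice, to a word in lassos $\sigma_{1,j} = \gamma_{1,j} \sigma_{1,j}^* \gamma_{1,j}^{-1}$, where $\gamma_{1,j}$ is a path in the slice from $A$ to a point near $L_1^{(j)}$ and $\sigma_{1,j}^*$ is a small positively oriented loop encircling $L_1^{(j)}$; along the entire $\sigma_{1,j}$ the coordinate $z_2 = x_1 - i x_2$ is constant by construction. The same argument applied to $\sigma^{(2)}$ produces lassos $\sigma_{2,j}$ along which $x_1 + i x_2$ is constant. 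Concatenating the two resulting words yields exactly the form \eqref{eq:Contour1}--\eqref{eq:Contour2}. The main obstacle is ensuring that the separating homotopy $h(\tau, s)$ never crosses $T$; this is where the product structure of $T$ in the characteristic coordinates is essential, since a more naive linear deformation of $\sigma$ onto a fixed slice could easily run into a 2-line. Once this is in place, the remainder is the standard product formula $\pi_1(X \times Y) = \pi_1(X) \times \pi_1(Y)$ together with the well-known free presentation of $\pi_1$ of a punctured plane.
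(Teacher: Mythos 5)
Your proof is correct, and it is in fact considerably more rigorous than the argument given in the paper. The paper's own proof is essentially a one-paragraph appeal to ``the principles of analytic continuation'': it asserts that the lassos $\sigma_{\alpha}$ represent the homotopy classes of $\mathbb{C}^2\setminus T$ and that any loop is therefore a word in them, without ever establishing that these lassos actually generate $\pi_1(\mathbb{C}^2\setminus T)$. You supply precisely the missing ingredient: in the characteristic coordinates $z_1=x_1+ix_2$, $z_2=x_1-ix_2$ the branch set becomes $T=(\mathcal{Z}_1\times\mathbb{C})\cup(\mathbb{C}\times\mathcal{Z}_2)$, so that $\mathbb{C}^2\setminus T\cong(\mathbb{C}\setminus\mathcal{Z}_1)\times(\mathbb{C}\setminus\mathcal{Z}_2)$ splits as a product of two $N$-punctured planes, whence $\pi_1(\mathbb{C}^2\setminus T)\cong F_N\times F_N$ with the lassos as generators. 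Your explicit reparametrisation homotopy $h(\tau,s)=(z_1(f_s(\tau)),z_2(g_s(\tau)))$ is checked correctly: it is based at $A$ for all $s$, stays in the complement of $T$ because $z_1(\cdot)$ never meets $\mathcal{Z}_1$ and $z_2(\cdot)$ never meets $\mathcal{Z}_2$ separately, and at $s=1$ it lands on the concatenation of two slice loops, each of which then decomposes into lassos with the correct constant coordinate ($x_1-ix_2$ constant on $\sigma_{1,j}$, $x_1+ix_2$ constant on $\sigma_{2,j}$), matching the statement exactly. What your route buys is a genuine proof, including the commutativity of the two families of generators (which the paper only argues locally near crossing points in Section 3); what the paper's looser phrasing buys is brevity and independence from the explicit product decomposition, but at the cost of leaving the generation claim unjustified. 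One could also note that your computation immediately yields the sharper statement that the relations among the $\sigma_{\ell,j}$ are exactly the commutators $[\sigma_{1,j},\sigma_{2,k}]$, which is consistent with, and strengthens, Theorem \ref{th:MatrixResults}(b).
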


Examples of such contours can be found in Figure~\ref{fig05}.   

\begin{figure}[ht]
\centering
\includegraphics{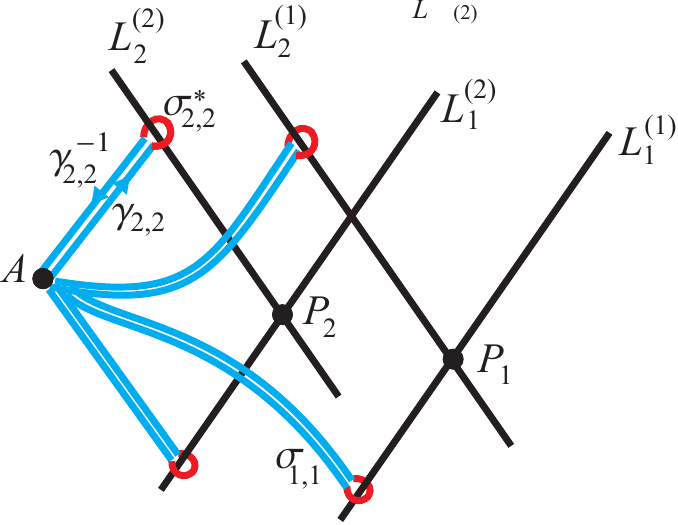}
\caption{Illustration of some contours $\sigma_\alpha$, $\gamma_\alpha$ and $\sigma^*_\alpha$}
\label{fig05}
\end{figure}

\begin{proof}
	Consider a small neighbourhood $U\subset\mathbb{C}^2\setminus T$ of some point $A$. By the principles of analytic continuation, all possible branches $u_c(A)$ are obtained by continuation along loops ending and starting at $A$ that cover all possible combinations of the homotopy classes of $\mathbb{C}^2\setminus T$. Each $\sigma_\alpha$ represents one of the homotopy classes, and hence by allowing $\sigma'$ to take the form (\ref{eq:Contour1}), all possible combinations of such classes are covered. Finally since any closed contour $\sigma$ can be written as a combination of homotopy classes, then we can in principle deform any closed contour $\sigma$ to one akin to $\sigma'$.
\end{proof}


Note that the elementary paths $\sigma_{\alpha}$ have the following property. 
Since either $x_1 - i x_2$ or $x_1 + i x_2$ is constant on such a path, 
either $A_2$ or $A_1$ remains constant as the path $\sigma_{\alpha}$ is passed. 

In the next theorem, we formulate the main general result of the paper, which is that there exists a finite basis of elementary functions such that any branch of the analytical continuation $u_c$ is a linear combination, with integer coefficients, of such functions.



\begin{theorem}
\label{th:FiniteBasis}
For a neighbourhood $U\subset\mathbb{C}^2\setminus T$ of a given point $A$, one can find a finite set of basis 
functions 
$g_1 (A) , \dots , g_Q (A)$, 
$A \in U$,
which are analytical 
solutions of the complex Helmholtz equation (\ref{eq:ComplexHelmholtz}) in $U$, 
and such that any branch $u_{\rm c}(A ; \sigma)$,  
can be written as a linear combination 
\begin{equation}
u_{\rm c}(A ; \sigma) = \sum_{q = 1}^Q
b_q (\sigma) g_q (A) ,
\label{eq:LinComb}
\end{equation}
where $b_q (\sigma) $ are integer coefficients that are 
constant with respect to $A$.  
The dimension of the basis, $Q$, can be defined from the topology of $S$.
\end{theorem}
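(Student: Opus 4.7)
The plan is to package the contour-surgery analysis already developed in Theorems~\ref{th:Picard} and~\ref{th:Picard2} together with the homotopy decomposition of Theorem~\ref{th:ContourRep}. First I would fix a base point $A_0 \in S \setminus \cup_j P_j$, a reference path $\sigma_0$ from $A_0$ to $A$, and an associated family of contours $\Gamma_0(\tau) \subset S$ as provided by Theorem~\ref{th:Cont}. This singles out a ``reference'' branch
\[
g_1(A) \;=\; \int_{\Gamma_0(1)} \!\left[ \frac{\partial G}{\partial n'}({\bf r},{\bf r}')\,u({\bf r}') - \frac{\partial u}{\partial n'}({\bf r}')\,G({\bf r},{\bf r}') \right] dl',
\]
which is an analytic solution of~(\ref{eq:ComplexHelmholtz}) on a small neighbourhood $U$ of $A$, because $G$ and $\partial_{n'}G$ are annihilated by the complex Helmholtz operator in the variable ${\bf r}$.

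Next, for an arbitrary path $\sigma$ from $A_0$ to $A$, I would apply Theorem~\ref{th:ContourRep} to the loop $\sigma_0^{-1}\sigma$, writing it homotopically as a concatenation $\sigma_{\alpha_1}^{m_1}\cdots\sigma_{\alpha_H}^{m_H}$ with $\alpha_h = (\ell_h, j_h)$. The key observation, already extracted from the proofs of Theorems~\ref{th:Picard} and~\ref{th:Picard2}, is that each single application of $\sigma_{(\ell,j)}$ replaces the current integration contour by itself plus a finite integer combination of double-eight contours on $S$, each based at the pair $(A_\ell(A), P_j)$ and living on a specific collection of sheets. Iterating this through the decomposition of $\sigma_0^{-1}\sigma$ shows that the contour producing $u_{\rm c}(A;\sigma)$ differs from $\Gamma_0(1)$ by an integer combination of double-eight contours based at pairs of the form $(A_\ell(A), P_j)$ on various sheet configurations. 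Defining $g_q(A)$, for $q\ge 2$, to be the Green's integral along each homology class of such double-eight contour, I obtain~(\ref{eq:LinComb}) with integer coefficients $b_q(\sigma)$ that depend only on the homotopy class of $\sigma$ (by Theorem~\ref{the:PrincipleAC}) and not on the evaluation point $A$.

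The main obstacle, and the step where I would concentrate the rigour, is the \emph{finiteness} of the catalogue of basis contours. A double-eight contour of the kind produced above is classified by the choice of branch point $P_j$ (there are $N$ of them), the choice of $\ell \in \{1,2\}$, and the combination of sheets of $S$ visited by the small circles of the double-eight; since $S$ has $M$ sheets and the gluing scheme defining $S$ is finite, only finitely many sheet configurations can occur. The task is to show that no further application of an elementary generator $\sigma_{(\ell',j')}$ produces a genuinely new homology class beyond those already listed: new double-eights that arise can be expressed, via deformations off the singular points using the cancellation identity~(\ref{eq:HankelId}) and the Meixner bound at the $P_j$'s exactly as in the proof of Theorem~\ref{th:Picard2}, as integer combinations of the catalogued ones. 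Once this stability of the catalogue under the generator action is verified, $Q$ is bounded by a function of $(M,N)$ and of the combinatorics of the gluing scheme, i.e.\ by the topology of $S$, completing the proof.
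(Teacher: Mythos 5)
Your proposal is correct and follows essentially the same route as the paper's proof: $g_1$ is the Green's integral over the reference contour, the remaining $g_q$ are Green's integrals over the (finitely many, since $S$ has $M$ sheets and $N$ branch points) lifts to $S$ of the double-eight contours based at $(A_\ell, P_j)$, and any branch is reached by decomposing the loop via Theorem~\ref{th:ContourRep} and applying the contour surgery of Theorems~\ref{th:Picard} and~\ref{th:Picard2} to each elementary bypass. The ``catalogue stability'' you flag as the main point of rigour is exactly what the paper resolves by taking the basis contours to be \emph{all} preimages $\psi^{-1}(\delta\Gamma_{\ell,j})$ on $S$, which is automatically closed under the generator action.
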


\begin{proof}
 
According to Theorem~\ref{th:Sing}, Theorem~\ref{th:Picard}, Theorem~\ref{th:Picard2}  and the basic principles 
of analytical continuation, it is enough to prove the statement of the theorem for an arbitrary small neighbourhood $U$. 

Indeed, if the theorem is true for such a neighbourhood, the elements of the basis, then, can be expressed 
as linear combinations of $Q$ linearly independent branches of $u_{\rm c}$.
The coefficients of the combination are constant, therefore the elements of the basis 
have singularities only at $T$, and the type of branching is the same as that of $u_{\rm c}$. Any other neighbourhood $U'$ can be connected 
with $U$ with a path in $\mathbb{C}^2 \setminus T$, and the formula 
(\ref{eq:LinComb}) can be continued along this path.
 
Hence, below, we prove the theorem for a point $A$, for which the mutual 
location of the points $A_{1,2}$ (the real points associated to $A$) and $P_j$ (the branch points on $S$) is convenient in some sense. 
 
Choose the ``convenient'' neighbourhood 
$U$ as follows. Let us assume that the point $A\equiv(x_1, x_2)$ is far enough
from the branch points $P_j$, is close to $\mathbb{R}^2$ 
but does not belong to $\mathbb{R}^2$. This results in a configuration akin to that illustrated in Figure~\ref{fig03}.    
  
\begin{figure}[ht]
\centering
\includegraphics[width=0.4\textwidth]{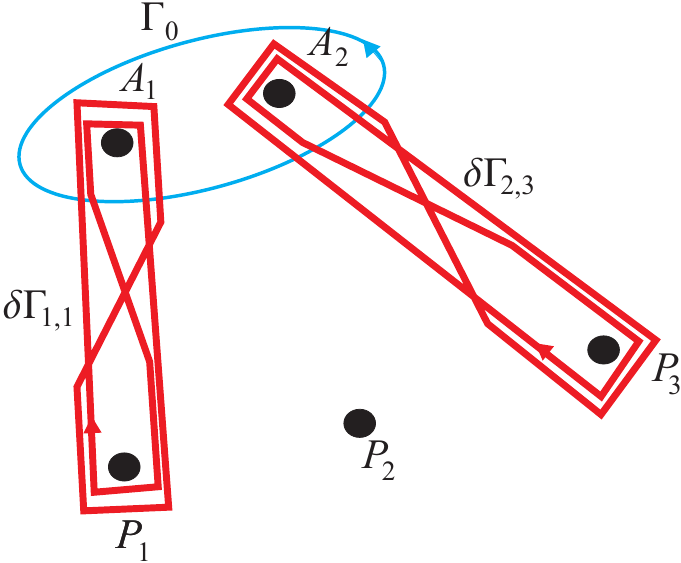}
\caption{Illustration of some integration contours used to define the basis functions}
\label{fig03}
\end{figure}

Define the basis functions $g_q$ as follows. Let $g_1$ be defined by 
the integral (\ref{eq:ContFormula}) with the
contour 
$\Gamma_0$ shown in Figure~\ref{fig03}. Such a function $g_1$
is equal to $u_{\rm c}$ obtained by analytical continuation from a small neighbourhood of a point belonging to~$S$ as per Theorem \ref{th:Cont}. 

All the other basis functions are constructed as follows.  
Let $\delta \Gamma_{\ell,j}$ be a double-eight contour based on the 
points $A_\ell$, $\ell \in \{ 1,2 \}$, and $P_j$, $j \in \{1, \dots, N \}$, as illustrated in Figure~\ref{fig03}. 
Consider all preimages $\psi^{-1}(\delta \Gamma_{\ell,j})$ on~$S$.
Denote them by  $\delta \Gamma_{\ell,j}^{(1)},\dots,\delta \Gamma_{\ell,j}^{(M)}$, where we remind the reader that $M$ is the finite number of sheets of $S$. Some of them 
are linearly independent.
The functions $g_2 , \dots , g_Q$ are the integrals of the 
form (\ref{eq:ContFormula}) taken with all linearly independent 
contours from the set~$\{\delta \Gamma_{\ell,j}^{(1)},\dots,\delta \Gamma_{\ell,j}^{(M)}\}$. 

To see this, let us continue the function $u_{\rm c}$ from $A$ along some closed
path~$\sigma$. Deform the path $\sigma$ into a path $\sigma'$ as in Theorem~\ref{th:ContourRep}. 
Each building block of $\sigma'$, $\sigma_{\ell,j}$, can be analysed by the procedure 
described in the proof of Theorem~\ref{th:Picard2}.     
A local contour $\sigma^*_{\ell,j}$ produces several 
local 
double-eight loops, and the path $\gamma^{-1}_{\ell,j}$
stretches the loops into those shown in Figure~\ref{fig03}. 
\end{proof}

As mentioned in the statement of Theorem \ref{th:FiniteBasis}, the exact value of the dimension $Q$ of the basis depends on the topology of $S$ and hence on the specific diffraction problem considered. In the next section we show (among other results) that in the case of the Dirichlet strip problem, we have $Q=4$.

\section{The strip problem}
\label{sec:strip6}

Everything so far has been done for a generic scattering problem described in introduction. We
will here deal with the specific problem of diffraction by a finite strip $(-
a < x_1 < a, x_2 = 0)$
. The canonical problem of
diffraction by a strip has attracted a lot of attention since the beginning of
the 20th century, and various innovative mathematical methods have been
designed and implemented to solve it: Schwarzschild's series
{\cite{Schwarzschild1901}}, Mathieu functions expansion {\cite{Morse1938}},
modified Wiener-Hopf technique {\cite{Noble1958,Abrahams1982}}, embedding and
reduction to ODEs {\cite{WILLIAMS1982,Shanin2001}}. It has important
applications, including in aero- and hydro-acoustics, see {\cite{Nigrophd}} for
example.

The problem can be formulated as follows: find the total field $u$ satisfying
the Helmholtz equation and Dirichlet boundary conditions $(u = 0)$ on the
strip, resulting from an incident plane wave. The scattered field (the
difference between the total and the incident field) should satisfy the
Sommerfeld radiation condition, and the total field should satisfy the Meixner
conditions at the edges of the strip.

Here we assume that this physical field $u$ is known and we consider the associated Sommerfeld surface $S$. It has two branch points $P_1 = (a, 0)$ and $P_2 = (- a, 0)$, so that $N = 2$. They are each of order 2, and the Sommerfeld surface $S$ has two sheets ($M= 2$). The surface $S$ is shown in Figure~\ref{fig01}{\color{myred}b}. Let us assume that sheet~1 is  the physical sheet, while sheet~2 is its mirror reflection. 

 We will now apply the general theory developed in the paper in order to unveil the analytical continuation $u_c$.



Let $A\equiv(x_1,x_2)\in\mathbb{C}^2\setminus(T\cup\mathbb{R}^2)$ be some point near $\mathbb{R}^2$.
Consider all possible continuations of $u_{\rm c}(A)$ along closed paths.
According to Theorem~\ref{th:ContourRep},
any such path can be represented as a concatenation of
elementary paths $\sigma_{\ell,j}$ for $\ell\in\{1,2\}$ and $j\in\{1,2\}$.    
The elementary paths 
can be chosen in a such a way that corresponding trajectories of the points 
$A_1$ and $A_2$ are as shown in Figure~\ref{fig12}.

\begin{figure}[h]
\centering
 \includegraphics[width=0.45\textwidth]{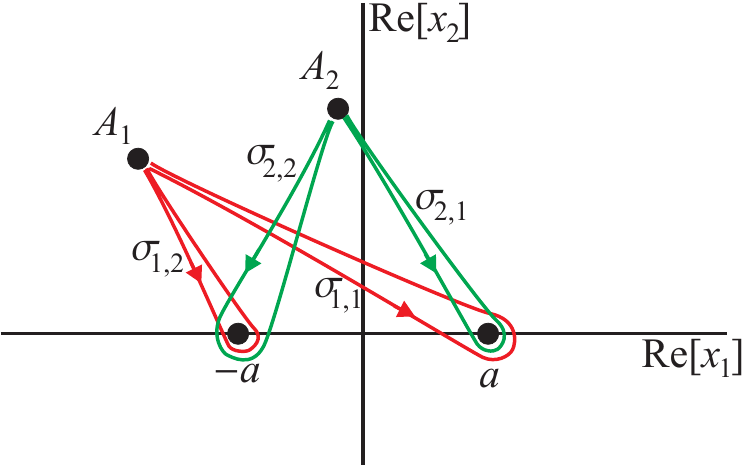}
  \caption{Trajectories of $A_{1,2}$ corresponding to the elementary paths $\sigma_{\ell,j}$}
\label{fig12}
\end{figure}

As it follows from the considerations in previous sections, each path 
$\sigma_{\ell,j}$
is an elementary bypass about the branch 2-line 
$L_\ell^{(j)}$.
The 2-lines  $L_1^{(j)}$ are bypassed in the positive direction, 
while the 
2-lines  $L_2^{(j)}$ are bypassed in the negative direction.

Now let us choose the basis functions $g_j$ of Theorem \ref{th:FiniteBasis}. 
For this, consider the contours $\Gamma_0$ and the double-eight
loops $\delta \Gamma_{\ell,j}$ for $\ell,j \in \{1,2\}$ shown in Figure~\ref{fig13}. 
For definiteness, we assume that the points marked by a small black circle on the contours belong to the physical sheet of~$S$. 

\begin{figure}[h]
\centering
 \includegraphics[width=0.9\textwidth]{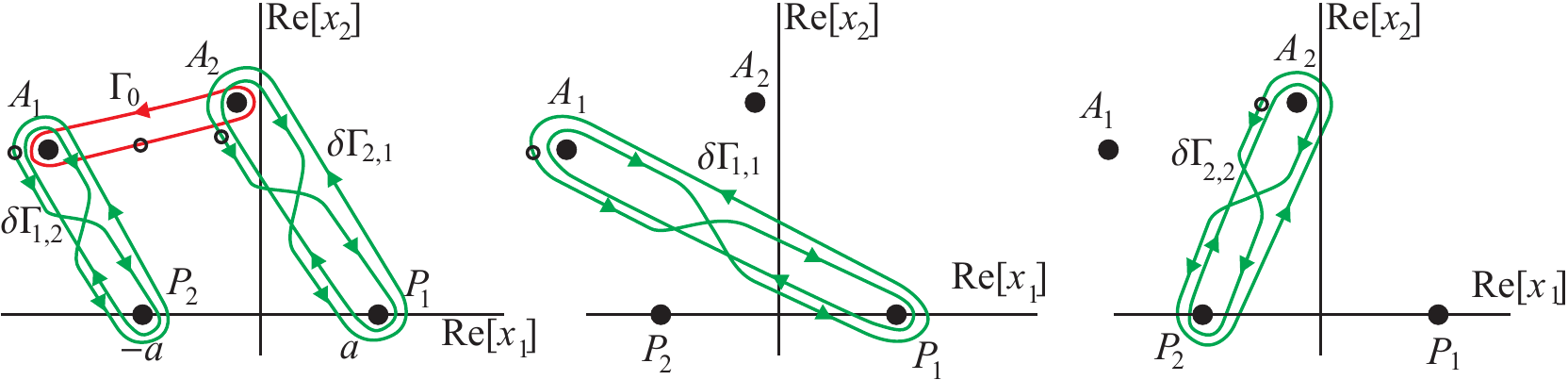}
  \caption{Basis contours for diffraction by a segment}
\label{fig13}
\end{figure}

Formally, since the branch points $P_{j}$ are of order~2, there should exist two double-eight loops for each pair of indexes $\ell,j$: $\delta \Gamma_{\ell,j}$ and $\delta \Gamma_{\ell,j}^{(1)}$. However, due to the identity (\ref{eq:ContourEq}), the loops $\delta \Gamma_{\ell,j}^{(1)}$ are not needed to be included into the basis, reducing the number of candidates for the basis functions from 9 to 5. 

Moreover, using contour deformation and cancellation, one can check the following contour identity
\begin{equation}
\delta \Gamma_{1,1} + \delta \Gamma_{2,1}
=
\delta \Gamma_{1,2} + \delta \Gamma_{2,2},
\label{eq:LinDep}
\end{equation}
by showing that each side of (\ref{eq:LinDep}) is equal to $2\Gamma_0$.
Therefore, 
one of the double-eight loops, for example $\delta \Gamma_{2,2}$, does 
not need to be included into the basis since it depends linearly on the other 3 contours. This effectively reduces the number of basis functions to 4, as required and these basis functions can be explicitly written as
\begin{align}
g_1(x_1, x_2) &= \int_{\Gamma_0} \left[ \dots \right] dl', 
& g_2(x_1, x_2) &= \int_{\delta \Gamma_{1,1}} \left[ \dots \right] dl', 
\label{eq:g12} \\
g_3(x_1, x_2) &= \int_{\delta \Gamma_{1,2}} \left[ \dots \right] dl', 
& g_4(x_1, x_2) &= \int_{\delta \Gamma_{2,1}} \left[ \dots \right] dl', 
\label{eq:g4}
\end{align}
where the  $\left[ \dots \right]$ integrand is the same as in (\ref{eq:ContFormula}).

Upon introducing the vector of functions
\begin{equation}
{\rm W} = (g_1 , g_2 , g_3 , g_4)^T, 
\label{eq:VecUn}
\end{equation} 
the analytic continuation of $u_{\rm c}$ is fully described by the following theorem.

\begin{theorem}
\label{th:Matrices}
The analytical continuation along 
each closed fundamental path $\sigma_{\ell,j}$ affects the vector of basis 
functions as follows: 
\begin{equation}
{\rm W} \stackrel{\sigma_{\ell,j}}{\longrightarrow} {\rm M}_{\ell,j} {\rm W},
\label{eq:Mat}
\end{equation}
where the $4\times4$ constant matrices $M_{\ell,j}$ are given by 
\begin{equation}
{\rm M}_{1,1} = \left( \begin{array}{cccc}
1 & -1 & 0 & 0 \\
0 & -1 & 0 & 0 \\
0 & -2 & 1 & 0 \\
0 & 0  & 0 & 1
\end{array} \right),
\qquad 
{\rm M}_{1,2} = \left( \begin{array}{cccc}
1 &  0 & -1 & 0 \\
0 &  1 & -2 & 0 \\
0 &  0 & -1 & 0 \\
0 &  0 &  0 & 1
\end{array} \right),
\label{eq:M1112}
\end{equation}
\begin{equation}
{\rm M}_{2,1} = \left( \begin{array}{cccc}
1 &  0 & 0 & -1 \\
0 &  1 & 0 & 0 \\
0 &  0 & 1 & 0 \\
0 & 0  & 0 & -1
\end{array} \right),
\qquad 
{\rm M}_{2,2} = \left( \begin{array}{cccc}
1 & -1 &  1 & -1 \\
0 &  1 &  0 & 0 \\
0 &  0 &  1 & 0 \\
0 & -2 &  2 & -1
\end{array} \right).
\label{eq:M2122}
\end{equation}
\end{theorem}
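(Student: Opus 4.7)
The plan is to compute, for each of the four elementary bypasses $\sigma_{\ell,j}$, how each of the four integration contours defining $g_1, g_2, g_3, g_4$ deforms, using the mechanism established in the proof of Theorem~\ref{th:Picard2}. Recall that a single bypass $\sigma_{\ell,j}$ causes $A_\ell$ to encircle $P_j$ once (positively for $\ell=1$, negatively for $\ell=2$, as illustrated in Figure~\ref{fig12}), and every fragment of an integration contour passing between $A_\ell$ and $P_j$ contributes, after the bypass, an additional double-eight loop based on $A_\ell$ and $P_j$, with a sign determined by the direction in which the fragment crosses the segment $[A_\ell,P_j]$. So the core task is a systematic bookkeeping exercise: read off from Figure~\ref{fig13} how many such fragments each of the four basis contours has, and with what signs.

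For each fixed pair $(\ell,j)$ I would list, column by column of $\mathrm{M}_{\ell,j}$, the effect on $\Gamma_0, \delta\Gamma_{1,1}, \delta\Gamma_{1,2}, \delta\Gamma_{2,1}$. For instance, under $\sigma_{1,1}$ the contour $\Gamma_0$ has a single fragment crossing between $A_1$ and $P_1$ and hence acquires a single $-\delta\Gamma_{1,1}$, giving the first column of $\mathrm{M}_{1,1}$; the double-eight $\delta\Gamma_{1,2}$ has two strands joining $A_1$ to $P_2$ that each cross between $A_1$ and $P_1$, producing $-2\delta\Gamma_{1,1}$ and yielding the third column; and $\delta\Gamma_{2,1}$ has no fragment between $A_1$ and $P_1$, so its column is trivial. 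Two further relations intervene. When the contour being deformed is itself $\delta\Gamma_{\ell,j}$, the deformation produces the rotated copy $\delta\Gamma_{\ell,j}^{(1)}$; specialising the identity (\ref{eq:ContourEq}) to an order-2 branch point gives $\delta\Gamma_{\ell,j} + \delta\Gamma_{\ell,j}^{(1)} = 0$, so the basis function is simply negated (hence the diagonal $-1$'s in $\mathrm{M}_{\ell,j}$). When $\sigma_{2,2}$ acts and apparent contributions $\delta\Gamma_{2,2}$ appear, they are eliminated using the linear dependence (\ref{eq:LinDep}), i.e.\ $\delta\Gamma_{2,2} = \delta\Gamma_{1,1} + \delta\Gamma_{2,1} - \delta\Gamma_{1,2}$, which is exactly what produces the non-trivial last row of $\mathrm{M}_{2,2}$.

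As a consistency check I would verify $\mathrm{M}_{\ell,j}^2 = \mathrm{I}$ for each $(\ell,j)$, which must hold because $P_j$ is a branch point of order~2 and hence two consecutive bypasses are homotopic to the trivial loop. The main obstacle will be the sign conventions: each $\delta\Gamma_{\ell,j}$ is traversed on two sheets of $S$ along four straight segments between $A_\ell$ and $P_j$, and both the orientation of the crossing fragment of $\Gamma_0$ (or of another double-eight) and the sense in which $A_\ell$ encircles $P_j$ feed into the sign. A safer alternative to arguing purely from the pictures is to treat the contours as elements of the first homology of $S\setminus\{A_1,A_2,P_1,P_2\}$ and compute the action of each bypass algebraically, exploiting the two relations (\ref{eq:ContourEq}) and (\ref{eq:LinDep}) as the only relations beyond freeness; the matrices in (\ref{eq:M1112})--(\ref{eq:M2122}) then arise as the matrix of this action in the chosen basis.
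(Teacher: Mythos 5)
Your approach is essentially the one the paper intends: the authors omit the proof ``for brevity'', stating only that the matrices can be checked directly by studying how the double-eight contours transform under each $\sigma_{\ell,j}$, which is precisely the bookkeeping you outline --- counting fragments passing between $A_\ell$ and $P_j$, using (\ref{eq:ContourEq}) with $p=2$ to get the diagonal $-1$'s, and using (\ref{eq:LinDep}) to eliminate $\delta\Gamma_{2,2}$ (your sample computations do reproduce the stated entries, though with the convention ${\rm W}\to{\rm M}_{\ell,j}{\rm W}$ the transformation of $g_i$ is recorded in the $i$-th \emph{row}, not column, of ${\rm M}_{\ell,j}$). The consistency checks you propose ($\mathrm{M}_{\ell,j}^2=\mathrm{I}_4$, commutation) are exactly what the paper carries out in Theorem~\ref{th:MatrixResults} as its own verification of the matrices.
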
 

The statement of the theorem can be checked directly by studying how the double-eight contours are transformed under the action of each $\sigma_{\ell,j}$, though it is omitted here for brevity.

Nevertheless, the following theorem and its proof can be considered as a check that the matrices given in Theorem \ref{th:Matrices} are indeed correct.

\begin{theorem}
\label{th:MatrixResults}
Let $j,k,\ell \in \{1,2\}$. The following statements are correct: 
\begin{itemize}

\item[a)] 
Each branch 2-line $L_\ell^{(j)}$ 
has order~2, i.e.\ :
\begin{equation}
u_{\rm c}(A ; (\sigma_{\ell,j})^2 ) = u_{\rm c}(A). 
\label{eq:Order}
\end{equation}

\item[b)]
The bypasses $\sigma_{1,j}$ and $\sigma_{2,k}$ commute for any pair $j,k$:
\begin{equation}
u_{\rm c}(A ; \sigma_{1,j} \sigma_{2,k} ) 
= 
u_{\rm c}(A ; \sigma_{2,k} \sigma_{1,j} ). 
\label{eq:Commute}
\end{equation}

\item[c)]
The intersecting branch 2-lines $L_1^{(j)}$ and $L_2^{(k)}$
have the additive crossing property: 
\begin{equation}
u_{\rm c}(A ) + u_{\rm c}(A ; \sigma_{1,j} \sigma_{2,k} ) 
=
u_{\rm c}(A ;  \sigma_{1,j} ) + u_{\rm c}(A ;\sigma_{2,k} ) .
\label{eq:Additive}
\end{equation} 

\end{itemize}

\end{theorem}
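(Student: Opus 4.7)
The plan is to translate each of (a), (b), (c) into a matrix identity for the $M_{\ell,j}$ acting on $W = (g_1, g_2, g_3, g_4)^T$ and to verify the resulting identities by direct computation. Since $g_1$ was introduced in the proof of Theorem~\ref{th:FiniteBasis} as the Green's integral over the standard contour $\Gamma_0$, which coincides with the analytic continuation of $u$ itself, I would start from the identification $u_{\rm c}(A) = g_1(A) = e_1^T W(A)$ with $e_1 = (1,0,0,0)^T$. Combined with (\ref{eq:Mat}), continuation of $u_{\rm c}$ along $\sigma_{\ell,j}$ is $u_{\rm c}(A) \mapsto e_1^T M_{\ell,j} W(A)$, and iterating, continuation along $\sigma_{\alpha_1} \sigma_{\alpha_2}$ (first $\sigma_{\alpha_1}$, then $\sigma_{\alpha_2}$) is $u_{\rm c}(A) \mapsto e_1^T M_{\alpha_2} M_{\alpha_1} W(A)$.

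With this dictionary, (a) becomes $e_1^T M_{\ell,j}^2 = e_1^T$; (b) becomes $e_1^T M_{2,k} M_{1,j} = e_1^T M_{1,j} M_{2,k}$; and (c) becomes $e_1^T (I + M_{2,k} M_{1,j}) = e_1^T (M_{1,j} + M_{2,k})$, which, using (b), is the same as $e_1^T (M_{1,j} - I)(M_{2,k} - I) = 0$. Each is a first-row check in the $4\times 4$ matrices (\ref{eq:M1112})--(\ref{eq:M2122}), for the four index choices $j,k\in\{1,2\}$. In fact all three identities hold as full matrix equalities, namely $M_{\ell,j}^2 = I$, $M_{1,j} M_{2,k} = M_{2,k} M_{1,j}$, and $(M_{1,j} - I)(M_{2,k} - I) = 0$, which I would simply record in a small table of matrix products.

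That this is what the authors mean by ``a check'' is confirmed by the fact that the three stronger identities can also be derived from the general theory developed earlier, independently of the explicit form of the $M_{\ell,j}$: (a) is Theorem~\ref{th:Picard2} specialised to $p=2$ (each branch point of the strip Sommerfeld surface has order $2$); (b) is the torus-commutation argument of Section~\ref{sec:sec3analcont} applied to the intersecting 2-lines $L_1^{(j)}$ and $L_2^{(k)}$; and (c) follows from the mechanism in the proof of Theorem~\ref{th:Picard2}, because $\sigma_{1,j}$ modifies the integration contour by adding a double-eight based on $(A_1, P_j)$ while $\sigma_{2,k}$ modifies it by one based on $(A_2, P_k)$, and for generic $A$ these two double-eight contours are disjoint, so the two corrections add linearly.

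There is no real mathematical obstacle; the only difficulty is administrative, namely fixing the convention that the concatenation $\sigma_{\alpha_1} \sigma_{\alpha_2}$ corresponds to left-multiplication of $W$ by $M_{\alpha_2}$ after $M_{\alpha_1}$, and being careful that the sign of the bypass of $L_2^{(j)}$ is opposite to that of $L_1^{(j)}$ (as noted at the end of the proof of Theorem~\ref{th:Picard2}). With the bookkeeping fixed, the theorem reduces to a short list of $4 \times 4$ products that simultaneously establish (a), (b), (c) and serve as the promised consistency check for the matrices of Theorem~\ref{th:Matrices}.
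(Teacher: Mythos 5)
Your proposal is correct and takes essentially the same approach as the paper, which likewise reduces (a), (b), (c) to the matrix identities ${\rm M}_{\ell,j}^2 = {\rm I}_4$, ${\rm M}_{1,j}{\rm M}_{2,k} = {\rm M}_{2,k}{\rm M}_{1,j}$ and ${\rm I}_4 + {\rm M}_{1,j}{\rm M}_{2,k} = {\rm M}_{1,j} + {\rm M}_{2,k}$, verified by direct computation on the matrices (\ref{eq:M1112})--(\ref{eq:M2122}). Your extra observations (that only the first row $e_1^T$ is strictly needed, and that the identities can be cross-checked against Theorem~\ref{th:Picard2} and the commutation argument of Section~\ref{sec:sec3analcont}) are refinements, not departures.
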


\begin{proof} 
To prove a) it is enough to show that for all $j,\ell\in\{1,2\}$, we have 
\[
{\rm M}_{\ell,j}^2={\rm I}_4,
\]
where ${\rm I}_4$ is the $4\times4$ identity matrix, while to prove b) one just needs to show that for all $j,k\in\{1,2\}$, we have
\[
{\rm M}_{1,j} {\rm M}_{2,k}   = {\rm M}_{2,k} {\rm M}_{1,j}.
\]
Finally, to prove c), it is enough to show that for all $j,k\in\{1,2\}$, we have
\[
{\rm I}_4 + 
{\rm M}_{1,j} {\rm M}_{2,k}   =  {\rm M}_{1,j} + {\rm M}_{2,k}.
\]
All this can be shown to be true directly for the matrices (\ref{eq:M1112}) and (\ref{eq:M2122}).
\end{proof}

Let us now discuss the proven relations. The relation (\ref{eq:Order}) can be 
considered as an alternative check of Theorem~\ref{th:Picard2}. 
Indeed, since $P_{1}$ and $P_2$ are branch points of order~2 on $S$, the branch 
2-lines $L_\ell^{(j)}$ have order~2. 

The second relation (\ref{eq:Commute}) follows from a fundamental property of
multidimensional complex analysis discussed at the end of Section \ref{sec:sec3analcont}:
the bypasses about $L_1^{(j)}$ and $L_2^{(k)}$ commute.

Finally, the relation (\ref{eq:Additive}) is discussed in details in~\cite{Assier2018a}.
This relation means that the function $u_{\rm c}$ 
can be represented locally as a sum of two functions 
having branch 2-lines, separately, at $L_1^{(j)}$ and at $L_2^{(k)}$. As shown in~\cite{Assier2018a} and \cite{Assier2019c} the additive 
crossing property plays a fundamental role
in the process of integration of functions of several complex variables.


\section{Link between finite basis and coordinate equations}
\label{sec:coordeq}

Here we continue to study the Dirichlet strip problem from the 
previous section. 

The property (\ref{eq:Mat}) of the vector ${\rm W}$ reminds of the behavior of a 
Fuchsian ordinary differential equation on a plane of a single
complex variable. Namely, 
the poles of the coefficients of a Fuchsian ODE are branch points of its solution, and the vector 
composed of linearly independent solutions is multiplied by a
constant  
{\em monodromy matrix\/} as the argument bypasses a branch point.  
Indeed, here, the ${\rm M}_{\ell,j}$ play the role of such monodromy matrices. 
It is also well known that, conversely, if a vector of linearly independent functions of a single 
variable have this behaviour, then there exists a Fuchsian equation obeyed by them (This is Hilbert's twenty-first problem). This can be shown using the concept of fundamental matrices and their determinants (the Wronskian), see \cite{Bolibrukh1992}.

Here the situation is more complicated. There are two independent complex variables
instead of one.
Moreover,  the behavior of the components of ${\rm W}$ at infinity are not explored. 
However, we can still prove some important statements. 

Throughout the paper, it is implicitly assumed that the field $u$, 
and, thus, its continuation basis 
${\rm W}$, depend 
on the angle of incidence $\vph^{\rm in}$. 
Let us now
consider four different incidence angles 
$\vph^{\rm in}_{\rm I}$,
$\vph^{\rm in}_{\rm II}$,
$\vph^{\rm in}_{\rm III}$,
and
$\vph^{\rm in}_{\rm IV}$. 
It hence leads to four
different wave fields, and to four basis vectors, all 
defined by (\ref{eq:VecUn}):
${\rm W}_{\rm I}$,
${\rm W}_{\rm II}$,
${\rm W}_{\rm III}$,
${\rm W}_{\rm IV}$.
Let us 
construct the $4 \times 4$ square matrix function ${\rm V}$ made of these
vectors, defined by
\begin{equation}
{\rm V} \equiv  \left( 
		{\rm W}_{\rm I}, 
        {\rm W}_{\rm II},
        {\rm W}_{\rm III},
        {\rm W}_{\rm IV},
  \right).
  \label{eq:MatV}
\end{equation}

We claim here (without proof) that the matrix ${\rm V}$ is non-singular
almost everywhere (in $\mathbb{C}^2$ minus a set having 
complex codimension~1), so that we can freely write ${\rm V}^{- 1}$. 

Note that the whole matrix is only branching at $T$, and that the equations 
(\ref{eq:Mat})
are valid for the matrix ${\rm V}$ as a whole: 
\begin{equation}
  {\rm V}  \overset{\sigma_{\ell,j}}{\longrightarrow}  {\rm M}_{\ell ,j}
  {\rm V}. 
  \label{eq:MatV1}  
\end{equation}

This allows us to formulate the following theorem, linking the theory of differential equations
to the strip diffraction problem:

\begin{theorem}
  \label{th:thfrobetal}
	There exist two $4 \times 4$ matrix functions ${\rm Z_1} (x_1 , x_2)$ and
	${\rm Z_2} (x_1 , x_2)$, meromorphic in $\mathbb{C}^2$, such that    
    \begin{itemize}
    \item[a)]
     the matrix function ${\rm V}$ satisfies the following differential equations
    \begin{equation}
      \ptl_{x_1} {\rm V}= {\rm V}\, {\rm Z}_1  
      \quad \text{ and } \quad 
      \ptl_{x_2} {\rm V}= {\rm V}\, {\rm Z}_2; 
      \label{eq:coordequations}
    \end{equation}
    
    \item[b)] 
    these matrix functions obey the consistency relation:
    \begin{equation}
      {\rm Z}_1\, {\rm Z}_2 -{\rm Z}_2\, {\rm Z}_1  
      = 
       \ptl_{x_2} {\rm Z}_1 - \ptl_{x_1} {\rm Z}_2 ,
    \label{eq:consistency}
    \end{equation}
   \end{itemize}
  where $\ptl x_\ell$ for $\ell \in \{ 1, 2 \}$ are the complex
  derivatives defined in (\ref{eq:ComplexHelmholtz}).
\end{theorem}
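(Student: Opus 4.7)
The plan is to define $Z_\ell := V^{-1} \partial_{x_\ell} V$ for $\ell \in \{1,2\}$, from which (\ref{eq:coordequations}) is immediate, and then verify in turn that $Z_\ell$ is single-valued on $\mathbb{C}^2 \setminus T$, that it extends meromorphically across both $T$ and the analytic hypersurface $\{\det V = 0\}$, and finally that the consistency relation (\ref{eq:consistency}) follows automatically from equality of mixed partials.

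For single-valuedness on $\mathbb{C}^2 \setminus T$, I would invoke (\ref{eq:MatV1}): under any generating bypass $\sigma_{\ell,j}$, one has $V \mapsto M_{\ell,j} V$ and hence $\partial_{x_\ell} V \mapsto M_{\ell,j}\, \partial_{x_\ell} V$ because $M_{\ell,j}$ is a constant matrix; therefore
\[
Z_\ell \;\mapsto\; (M_{\ell,j} V)^{-1}\, M_{\ell,j}\, \partial_{x_\ell} V \;=\; V^{-1} M_{\ell,j}^{-1} M_{\ell,j}\, \partial_{x_\ell} V \;=\; Z_\ell,
\]
so $Z_\ell$ is invariant under every monodromy generator and thus globally single-valued on $\mathbb{C}^2\setminus\bigl(T \cup \{\det V = 0\}\bigr)$.

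For the meromorphic extension, near $\{\det V = 0\}$ the matrix $V^{-1}$ has at worst pole-type singularities along an analytic hypersurface, and the same is inherited by $Z_\ell$. The more delicate case is a regular point of a branch 2-line $L_\ell^{(j)}$: introduce a local transversal coordinate $z$ and a tangential coordinate $w$. By Theorem~\ref{th:Picard2} (equivalently part a) of Theorem~\ref{th:MatrixResults}), the branching is of order~$2$, so $M_{\ell,j}^{2} = {\rm I}_4$ and $M_{\ell,j}$ is diagonalisable with eigenvalues $\pm 1$. Conjugating by a constant matrix $P$ of eigenvectors and writing $V = P\tilde V$, the rows of $\tilde V$ split into those invariant under the bypass $z\mapsto e^{2i\pi}z$ and those that acquire a sign, so locally
\[
\tilde V(z,w) = D(z)\, U(z,w), \qquad D(z) = \mathrm{diag}\bigl(\underbrace{1,\dots,1}_{d_+}, \underbrace{\sqrt{z},\dots,\sqrt{z}}_{d_-}\bigr),
\]
with $U$ holomorphic and generically invertible, where $d_\pm$ are the multiplicities of the $\pm 1$ eigenvalues. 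A direct computation gives
\[
Z_\ell \;=\; U^{-1}\bigl(D^{-1}\partial_{x_\ell} D\bigr) U \;+\; U^{-1}\partial_{x_\ell} U,
\]
and $D^{-1}\partial_z D$ is diagonal with entries in $\{0, 1/(2z)\}$ while $\partial_w D = 0$, so the whole expression is meromorphic in $(z,w)$. An analogous argument applies at crossings of 2-lines using the commuting bypass structure described at the end of Section~\ref{sec:sec3analcont}. Since $Z_\ell$ is holomorphic off a finite union of analytic hypersurfaces and locally meromorphic along each of them, standard several-variable extension results promote it to a globally meromorphic function on $\mathbb{C}^2$.

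The consistency relation (\ref{eq:consistency}) then follows automatically. Differentiating the identities $\partial_{x_2} V = V Z_2$ and $\partial_{x_1} V = V Z_1$ and equating the mixed partial $\partial_{x_1}\partial_{x_2} V = \partial_{x_2}\partial_{x_1} V$ gives
\[
V Z_1 Z_2 + V\,\partial_{x_1} Z_2 \;=\; V Z_2 Z_1 + V\,\partial_{x_2} Z_1;
\]
left-multiplying by $V^{-1}$ on the open dense set where it is invertible and extending the resulting identity by meromorphy yields (\ref{eq:consistency}). The main obstacle I anticipate is the meromorphic extension across $T$: it hinges on the branching being of finite order, exactly as furnished by Theorem~\ref{th:Picard2}. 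Had any logarithmic branching been present, the analogue of the factor $D^{-1}\partial_z D$ would produce $\log z$ terms and $Z_\ell$ would fail to be meromorphic, so the finite-basis property developed in Section~\ref{sec:analbranchinganal} is what makes the coordinate equations possible in the first place.
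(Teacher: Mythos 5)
Your proposal is correct and follows essentially the same route as the paper: define $Z_\ell = V^{-1}\partial_{x_\ell}V$, use the constancy of the monodromy matrices $M_{\ell,j}$ to get single-valuedness, and derive the consistency relation from equality of mixed partials. In fact you go further than the paper, which omits the meromorphy argument across $T$ "for brevity"; your local normal form $V = P\,D(z)\,U$ exploiting $M_{\ell,j}^2 = {\rm I}_4$ correctly supplies that missing detail and confirms the simple poles along the $L_\ell^{(j)}$ claimed in the text.
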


\begin{proof}
For a)  assume that ${\rm V}$ is known and that
${\rm V}^{- 1}$ exists almost everywhere (in $\mathbb{C}^2$ minus a set of complex 
codimension~1). In this case, the
coefficients ${\rm Z}_1$ and ${\rm Z}_2$ are simply given by
  \begin{equation}
    {\rm Z}_1 ={\rm V}^{- 1} \ptl_{x_1} {\rm V} 
       \quad \text{ and } \quad 
    {\rm Z}_2 ={\rm V}^{- 1} \ptl_{x_2} {\rm V}.
  \end{equation}
Let us show that the matrices ${\rm Z}_1$ and ${\rm Z}_2$ are single-valued
in $\mathbb{C}^2$. The only sets at which one can expect branching are 
the branch 2-lines of ${\rm V}$, i.e.\ $L_\ell^{(j)}$. Make a bypass 
$\sigma_{\ell,j}$ about a 2-line $L_\ell^{(j)}$ and study the change of ${\rm Z}_1$ 
and ${\rm Z}_2$ as the result of this bypass:
\begin{equation}
{\rm Z}_{k} \stackrel{\sigma_{\ell,j}}{\longrightarrow}    
{\rm V}^{-1} {\rm M}_{\ell,j}^{-1} \ptl_{x_k}\left( {\rm M}_{\ell,j} {\rm V}\right) = {\rm Z}_k  
,
\quad k \in \{1,2 \} ,
\end{equation}  
because ${\rm M}_{\ell,j}$ are constant matrices.
Thus, the coefficients ${\rm Z}_1$ and ${\rm Z}_2$ are not changing at the branch 2-lines of 
${\rm V}$, and, therefore, they are single-valued in $\mathbb{C}^2$. A detailed study 
shows that they have simple polar sets at the lines $L_{\ell}^{(j)}$, and, possibly, polar sets at the zeros of ${\rm det}({\rm V})$ though we omit this discussion here for brevity.  
  
To prove b), differentiate the first equation of (\ref{eq:coordequations})
with respect to $x_2$, and the second equation with respect to $x_1$. 
The expressions in the left are equal, and we get 
\[
\left(\ptl_{x_2} {\rm V}\right)  {\rm Z}_1 + {\rm V} \ptl_{x_2} {\rm Z}_1
= 
\left(\ptl_{x_1} {\rm V}\right) {\rm Z}_2 + {\rm V} \ptl_{x_1} {\rm Z}_2.
\]
Applying (\ref{eq:coordequations})
and multiplying by ${\rm V}^{-1}$, obtain (\ref{eq:consistency}).

As it follows from Frobenius theorem, this relation guarantees the solvability 
of the system (\ref{eq:coordequations}). 
\end{proof}

A detailed form of the coefficients ${\rm Z}_1$ and ${\rm Z}_2$ can be found 
in \cite{Shanin2002a,Shanin2008,Shanin2008a}. Here our aim was just to demonstrate 
that the existence of the coordinate equations is connected with the structure of analytical continuation of the solution. 

Finally, before concluding the paper, it is interesting to note that the idea of considering a set of different incident angles
and trying to link the solutions to each other by means of differential equations is
somewhat reminiscent of Biggs' interpretation of embedding formulae
{\cite{Biggs2006}}.


\section{Conclusion}

We have provided an explicit method to analytically continue two-dimensional wave fields
emanating from a broad range of diffraction problems and
described the singular sets (in $\mathbb{C}^2$) of their analytical continuation. We have shown that, even though the analytical continuation may have potentially infinitely many
branches, each branch can be expressed as a linear combination
of finitely many basis functions. Such basis functions are expressed as
Green's integrals over a real double-eight contour. The effectiveness of the general theory was illustrated via the example of diffraction by an ideal strip, for which we proved that only 4 basis functions were needed. Using these, we were able to completely describe the analytical continuation and study its branching behaviour. Finally, we have shown that this finite basis property was directly related to the existence of the so-called coordinate equation for the strip problem.

\section*{Acknowledgement}

	R.C. Assier would like to acknowledge the support by UK EPSRC
	(EP/N013719/1). Both authors would like to thank the Isaac Newton Institute for
	Mathematical Sciences, Cambridge, for support and hospitality during the programme
	``Bringing pure and applied analysis together via the Wiener--Hopf technique, its
	generalisations and applications'' where some work on this paper was undertaken. This work was supported by EPSRC (EP/R014604/1) and, in the case of A.V. Shanin, the Simons foundation. Both authors are also grateful to the Manchester Institute for Mathematical Sciences for its financial support.



\clearpage
\bibliographystyle{RS}
\bibliography{biblio}

\appendix
\numberwithin{equation}{section}

\section{Diffraction problems on a plane and on Sommerfeld surfaces}
\label{app:generalclass}
In this appendix, we aim to describe the wide class of 2D diffraction problems for which the theory developed in the paper is valid.

Consider an incident plane wave $u_\text{in}$ impinging on a set of scatterers. The aim is to find the resulting total field $u$ satisfying the Helmholtz equation and subject to some specified boundary conditions on the scatterers. For the problem to be well-posed, we also require that $u$ have bounded energy at the edges of the scatterers (Meixner condition) and that the scattered field $u-u_\text{in}$ be outgoing (radiation condition).

For the theory of the paper to be applicable to such problem, it should obey the four following simple rules:

\vspace*{-0.37cm}
\begin{enumerate}[label={\color{blue}\textup{R\arabic*}}]
	\item \label{item:rule1}all scatterers faces should be straight segments, possibly intersecting, possibly of infinite length;
	\item \label{item:rule2} the boundary conditions should be of Neumann or
	Dirichlet type, they are allowed to be different on each side of
	the segment;
	\item \label{item:rule3} the segments should either have the same supporting line or, failing that, their supporting lines should all intersect in one common point;
	\item \label{item:rule4}the angles between the supporting lines of each segment should be rational
	multiples of $\pi$.
\end{enumerate}
\vspace*{-0.37cm}

Provided that \ref{item:rule1}--\ref{item:rule4} are satisfied, the diffraction problem can be reformulated as a propagation problem on a Sommerfeld surface, without scatterers \textit{per se}, but with a finite number $N$ of branch points and a finite number $M$ of sheets.

One can formalise the procedure of building the Sommerfeld surface $S$ and the solution $u$ on it provided that the solution $u$ in the physical domain is known. Consider the origin of the polar coordinates $(\rho, \varphi)$ 
	to be the intersection point mentioned in the rule~\ref{item:rule3} (or anywhere on the support line if there is only one). The solution $u$ can be continued past the faces of the scatterers by following the reflection equations across each face of the scatterers:    
\begin{eqnarray}
\text{(Dirichlet):} \quad & u (\rho, \Phi + \varphi) = -
u (\rho, \Phi - \varphi), & \\
\text{(Neumann):} \quad & u (\rho, \Phi + \varphi) = +
u (\rho, \Phi - \varphi) , & 
\end{eqnarray}
	where $(\rho,\Phi)$ is a point belonging to a scatterer's face.
	
	In other words, one should take the physical domain with the solution $u$ on it, make cuts along each scatterers face, make enough copies of the physical plane by reflection across some lines (the supporting lines of the scatterers and of their successive reflections), and attach the obtained reflected planes to each other according to these reflection equations.

{\textbf{Examples.}} As illustrated in Figure \ref{fig:differentproblemsN1},
examples satisfying the conditions \ref{item:rule1}--\ref{item:rule4} with $N
= 1$ branch point include: the Dirichlet-Dirichlet (Dir-Dir) or
Neumann-Neumann (Neu-Neu) half-plane ($M= 2$), the
Dirichlet-Neumann (Dir-Neu) half-plane ($M= 4$) and any Dir-Dir or
Neu-Neu wedge with internal angle $p \pi / q$ with $(p, q) \in \mathbb{N}_{>
	0}$ and $q > p$, for which $M$ is equal to
the denominator of the irreducible form of the fraction $q / (2 q - p)$. All
these can be solved by the Wiener-Hopf technique, see e.g. {\cite{Noble1958}}
for Figure \ref{fig:differentproblemsN1}{\color{myred}a} , {\cite{Rawlins1975}} for Figure
\ref{fig:differentproblemsN1}{\color{myred}d}  and {\cite{Nethercote2020}} for Figure
\ref{fig:differentproblemsN1}{\color{myred}b}  and \ref{fig:differentproblemsN1}{\color{myred}d} .

\begin{figure}[h]
	\centering
	\includegraphics[width=0.99\textwidth]{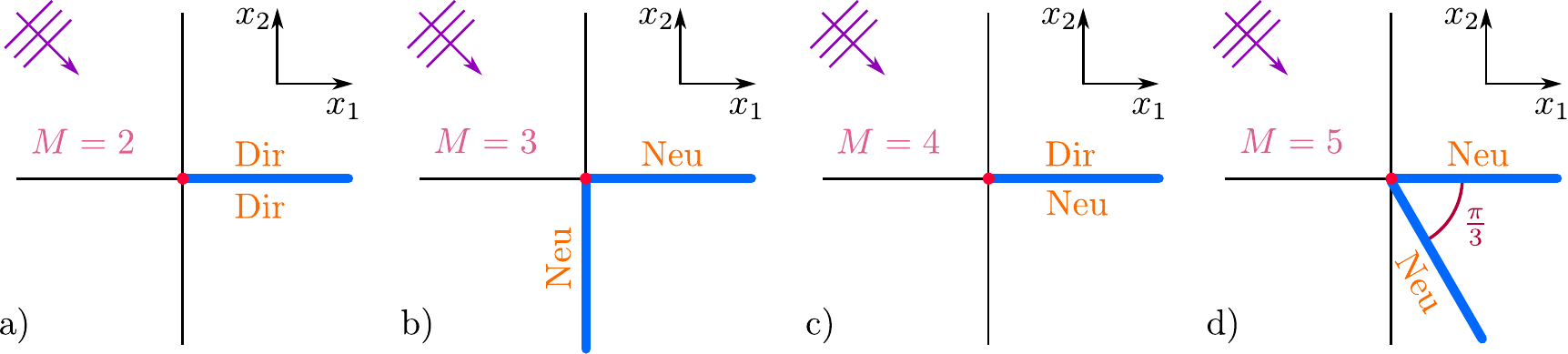}
	\caption{Example of scatterers (thick blue lines) with $N = 1$ branch point
		(thick red dot). }
	\label{fig:differentproblemsN1}
\end{figure}

Examples with higher-numbers of branch points, illustrated in Figure
\ref{fig:exNbig1}, include the Dir-Dir or Neu-Neu strip ($N = 2$,
$M= 2$), the Dir-Neu strip $(N = 2, M= 4)$, or more
complicated scatterers such as two intersecting Dir-Dir strips with respective
angle $\pi / 4$ and $3 \pi / 4$, for which branch points not belonging to the
physical scatterers start to occur $(N = 9, M= 8)$, etc.

\begin{figure}[h]
	\centering
	\includegraphics[width=0.75\textwidth]{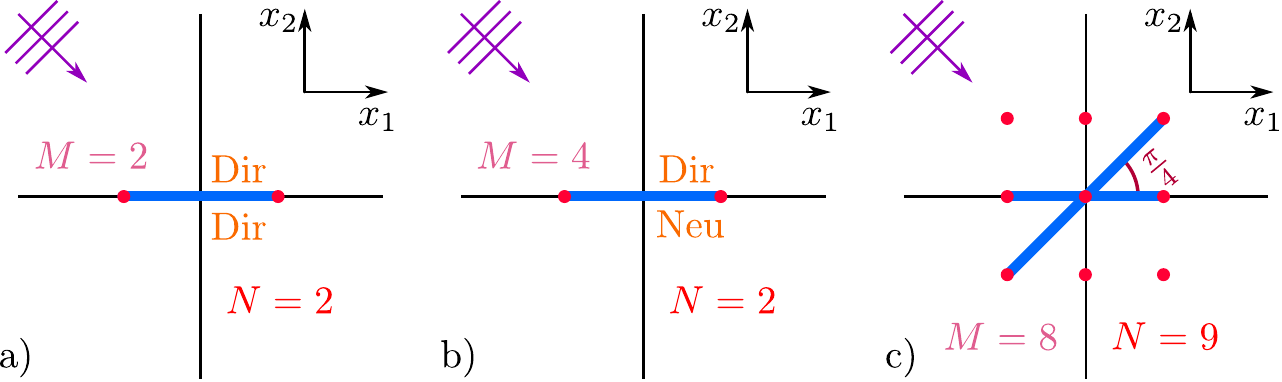}
	\caption{Example of scatterers (thick blue lines) with $N > 1$ branch points
		(thick red dots)}
	\label{fig:exNbig1}
\end{figure}

Finally, configurations with multiple non-intersecting scatterers are also possible, as illustrated in Figure \ref{fig:exnonintersectingmultiple}. The case of multiple aligned strips (Figure~\ref{fig:exnonintersectingmultiple}{\color{myred}a}) is a diffraction problem that has previously been investigated in e.g.\ \cite{Shanin2002a} (via coordinate equations) or \cite{Priddin2019} (via an iterative Wiener--Hopf method). 

\begin{figure}[h]
	\centering
	\includegraphics[width=0.5\textwidth]{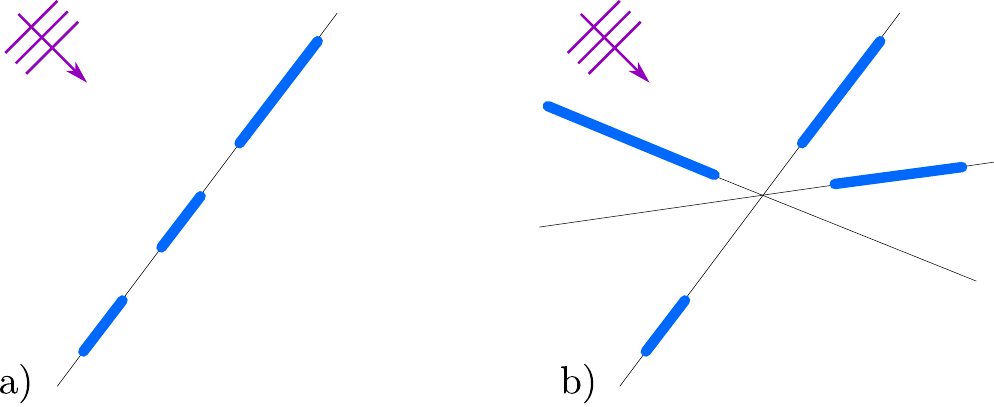}
	\caption{Example of multiple non-intersecting scatterers (thick blue lines), the supporting lines are illustrated in thin black lines}
	\label{fig:exnonintersectingmultiple}
\end{figure}

\section{Complex Green's theorem}
\label{app:Greens}

In this appendix we aim to formulate a complex version of Green's theorem. In order to do so we need to introduce the notion of \textit{complex gradient} of a function of several complex variables.
Because of the topic of the paper, it is enough to focus on $\mathbb{C}^2$ for which it can be defined as follows. For a function $v(x_1,x_2)$ of two complex variables $x_1$ and $x_2$, the complex gradient of $v$, denoted $\nabla_C v$, is defined as the complex 1-form
\[ \nabla_C v \equiv \partial_{x_1}\!v \, dx_2 - \partial_{x_2}\!v\, dx_1 . \]
The complex Green's theorem in $\mathbb{C}^2$ can hence be formulated as follows. 

\begin{theorem}[Complex Green's theorem]
	\label{th:complexgreenth}
	If two functions $v (x_1,
	x_2)$ and $w (x_1, x_2)$ both obey the complex Helmholtz equation (\ref{eq:ComplexHelmholtz}) in some
	neighbourhood (included in $\mathbb{C}^2$) where they are analytic, i.e.\ where they satisfy the Cauchy-Riemann conditions (\ref{eq:Cauchy}), then, in this neighbourhood,
	\[ d (v \nabla_C w - w \nabla_C v) = 0, \]
	where $d$ is the usual exterior derivative operator for differential forms.
\end{theorem}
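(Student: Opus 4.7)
The plan is to compute $d(v\nabla_C w - w\nabla_C v)$ directly, reduce it to an expression involving the complex Laplacians of $v$ and $w$, and then invoke the Helmholtz equation to make it vanish. First, using the definition of the complex gradient, the 1-form in question unfolds as
\[
\omega \;\equiv\; v\nabla_C w - w\nabla_C v \;=\; \bigl(v\,\ptl_{x_1}w - w\,\ptl_{x_1}v\bigr)\,dx_2 \;-\; \bigl(v\,\ptl_{x_2}w - w\,\ptl_{x_2}v\bigr)\,dx_1.
\]

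Next, I would apply the exterior derivative $d$ to $\omega$. On the real $4$-manifold $\mathbb{C}^2$ one normally has $d = \ptl + \bar\ptl$, but because $v$ and $w$ are assumed analytic in the neighbourhood (the Cauchy--Riemann conditions (\ref{eq:Cauchy}) hold), one has $\bar\ptl v = \bar\ptl w = 0$, and the same holds for any polynomial combination of $v$, $w$ and their $\ptl_{x_\ell}$ derivatives. Thus the only contributions to $d\omega$ come from differentiating the coefficients in the $x_1$ and $x_2$ holomorphic directions. Carrying this out and using $dx_1\wedge dx_2 = -dx_2\wedge dx_1$, the cross-products $\ptl_{x_\ell}v\cdot\ptl_{x_\ell}w$ cancel pairwise and one is left with
\[
d\omega \;=\; \bigl[\,v\,(\ptl_{x_1}^2 + \ptl_{x_2}^2)w \;-\; w\,(\ptl_{x_1}^2 + \ptl_{x_2}^2)v\,\bigr]\,dx_1\wedge dx_2.
\]

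Finally, since both $v$ and $w$ obey the complex Helmholtz equation (\ref{eq:ComplexHelmholtz}), we have $(\ptl_{x_1}^2 + \ptl_{x_2}^2)v = -\mathscr{k}^2 v$ and likewise for $w$, so the bracket in the above display reduces to $-\mathscr{k}^2 vw + \mathscr{k}^2 wv = 0$, and hence $d\omega = 0$, as claimed.

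The only mild obstacle is a bookkeeping one: justifying that $d$ acts on the coefficients of $\omega$ purely through the holomorphic derivatives $\ptl_{x_1}$ and $\ptl_{x_2}$, with no $d\bar x_\ell$ contributions. This is precisely the content of Cauchy--Riemann for $v$, $w$ (and hence also for $\ptl_{x_\ell}v$ and $\ptl_{x_\ell}w$, which are themselves holomorphic on the analyticity neighbourhood), and it must be stated once at the start of the computation. Everything else is a routine identity mimicking the classical real-variable Green's theorem, the only real input being the Helmholtz equation at the final step.
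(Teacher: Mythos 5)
Your proposal is correct and follows essentially the same route as the paper's own proof: decompose $d=\ptl+\bar\ptl$, kill the $\bar\ptl$ part via the Cauchy--Riemann conditions (for $v$, $w$ and their derivatives), expand the $\ptl$ part so the cross terms $\ptl_{x_\ell}v\,\ptl_{x_\ell}w$ cancel, and finish with the complex Helmholtz equation. No gaps to report.
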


\begin{proof}
	Following {\cite{Shabat2}}, it is convenient to decompose the
	exterior derivative $d$ as $d = \partial + \bar{\partial}$, where $\partial$ and $\bar{\partial}$ are the so-called Dolbeault operators. For any
	function (0-form) $f (x_1, x_2)$, they are defined by
	\begin{align*}
		\partial f = \partial_{x_1} f dx_1 + \partial_{x_2} f dx_2 \quad
		\text{ and } \quad \bar{\partial} f = \partial_{\bar{x}_1} f d\bar{x}_1 +
		\partial_{\bar{x}_2} f d \bar{x}_2,
	\end{align*}
	while for any complex 1-form $\omega = f_1 (x_1, x_2) d x_1 + f_2 (x_1, x_2)
	d x_2 + g_1 (x_1, x_2) d \bar{x}_1 + g_2 (x_1, x_2) d \bar{x}_2$, they are given by
	\begin{align*}
		\partial \omega & =  \partial f_1 \wedge d x_1 + \partial f_2 \wedge
		d x_2 + \partial g_1 \wedge d \bar{x}_1 + \partial g_2 \wedge
		d \bar{x}_2,\\
		\bar{\partial} \omega & =  \bar{\partial} f_1 \wedge d x_1 +
		\bar{\partial} f_2 \wedge d x_2 + \bar{\partial} g_1 \wedge d
		\bar{x}_1 + \bar{\partial} g_2 \wedge d \bar{x}_2 .
	\end{align*}
	Hence, using the fact that $v$, $w$ and their derivatives are analytic we can
	show that \\ $\bar{\partial} (v \nabla_C w - w \nabla_C v) = 0$, leading to
	\begin{align*}
		d (v \nabla_C w - w \nabla_C v) & =  \partial (v \nabla_C w - w
		\nabla_C v)\\
		& =  (\partial_{x_1} (v \partial_{x_1} w) + \partial_{x_2} (v
		\partial_{x_2} w) - \partial_{x_1} (w \partial_{x_1} v) - \partial_{x_2} (w
		\partial_{x_2} v)) d x_1 \wedge d x_2,
	\end{align*}
	where we have used that $d x_{\ell} \wedge d x_{\ell} = 0$ for $\ell
	\in \{ 1, 2 \}$ and that $d x_1 \wedge d x_2 = - d x_2 \wedge
	d x_1$. 
	
	Upon expanding out the derivatives, we obtain
	\begin{align*}
		d (v \nabla_C w - w \nabla_C v) =  (v (\partial_{x_1}^2 w +
		\partial_{x_2}^2 w) - w (\partial_{x_1}^2 v + \partial_{x_2}^2 v)) d
		x_1 \wedge d x_2 = 0,
	\end{align*}
	as required, since both $v$ and $w$ satisfy the complex Helmholtz equation.
\end{proof}

	Note that the same result can be proven in $\mathbb{C}^3$. In order to do so, one has to define the complex gradient of a function $v(x_1,x_2,x_3)$ as
	\[ \nabla_C v = \partial_{x_1}v\, d x_2 \wedge d x_3 +
	\partial_{x_2}v \, d x_3 \wedge d x_1 + \partial_{x_3}v\, d x_1 \wedge d x_2 . \]
	These complex gradients definitions are closely related to and can be expressed in terms of the Hodge star operator.



The complex Green's theorem, combined with Stokes' theorem for complex differential forms, makes it possible to show that for any two functions $v$ and $w$ satisfying the hypotheses of Theorem \ref{th:complexgreenth}, and two contours $\gamma_1$ and $\gamma_2$ that can be deformed homotopically to each other within the region of analyticity of $v$ and $w$, we have
\begin{align}
\int_{\gamma_1} \left( v \nabla_C w - w \nabla_C v \right) = \int_{\gamma_2} \left( v \nabla_C w - w \nabla_C v\right). \label{eq:GreencontourdefappC}
\end{align}
Hence, the value of the integral is not changed when the contour is deformed, even within $\mathbb{C}^2$. Moreover, note that when restricted to a contour $\gamma$ in $\mathbb{R}^2$, we have
\begin{align*}
v \nabla_C w - w \nabla_C v=\left[ v \frac{\partial w}{\partial n}-w \frac{\partial v}{\partial n} \right] dl.
\end{align*}
The link with what we have done in the paper becomes clear by choosing $v\equiv u_c$ and $w\equiv G$.

To prove that $u_c(A)$ is indeed defined uniquely by (\ref{eq:initialgreensidentity}) in a small complex neighbourhood of $A_0$, proceed as follows\footnote{Below, square brackets following either $\mathbb{C}^2$ or $\mathbb{R}^2$ are used to specify the coordinate system under consideration.}. Let $A\equiv(x_1^A,x_2^A)\in\mathbb{C}^2[x_1,x_2]\setminus T$ in a small neighbourhood of $A_0\in S$, and let $u_c(A)$ be the value obtained by letting ${\bf r}$ become a complex vector, and choosing $A$ close enough to $A_0$ such that $G({\bf r},{\bf r}')$ remains regular for ${\bf r}'\in \gamma$. 
	
	Consider now the change of variable $(\xi_1,\xi_2)=(x_1-x_1^A,x_2-x_2^A)$ so that in $\mathbb{C}^2[\xi_1,\xi_2]$, we have $A\equiv(0,0)$, so that $A\in\mathbb{C}^2[\xi_1,\xi_2]\cap\mathbb{R}^2[\xi_1,\xi_2]$, and $u_c$ can be studied as a function obeying the Helmholtz equation on the real plane $\mathbb{R}^2[\xi_1,\xi_2]$. This can be done by considering the function $\tilde{u}_c(\xi_1,\xi_2)=u_c(\xi_1+x_1^A,\xi_2+x_2^A)$, and, as such, it is given by the Green's formula
\begin{align*}
u_c(A)=\tilde{u}_c(0,0) &=\int_{\tilde{\gamma}} \left[ \tilde{u}_c(\tilde{{\bf r}}') \frac{\partial G}{\partial n'}({\bf 0},\tilde{{\bf r}}')-G({\bf 0},\tilde{{\bf r}}') \frac{\partial \tilde{u}_c}{\partial n'}(\tilde{{\bf r}}') \right] d\tilde{l}' = \int_{\tilde{\gamma}} \left( u_c \nabla_C G - G \nabla_C u_c \right),
\end{align*}
where $\tilde{\gamma}$ is a real contour encircling the origin of $\mathbb{R}^2[\xi_1,\xi_2]$, $\tilde{{\bf r}}'$ is a real vector pointing to a point in $\tilde{\gamma}$ and ${\bf 0}$ is the zero vector. The last equality comes from the fact that a form is independent of the coordinate system in which it is expressed, and from the translational invariance of $G$ that satisfies $G(x_1-x_1^A,x_2-x_2^A;x_1'-x_1^A,x_2'-x_2^A)=G(x_1,x_2;x_1',x_2')$. 

Now when viewed in $\mathbb{C}^2[x_1,x_2]$, $\tilde{\gamma}$ is not a contour in $\mathbb{R}^2[x_1,x_2]$, but by (\ref{eq:GreencontourdefappC}) we can deform this contour to the contour $\gamma\subset\mathbb{R}^2[x_1,x_2]$ used in (\ref{eq:initialgreensidentity}) without changing the value of the integral to get
\begin{align*}
u_c(A) &= \int_{\gamma} \left( u_c \nabla_C G - G \nabla_C u_c \right)=\int_{\gamma} \left[ u_c({\bf r}') \frac{\partial G}{\partial n'}({\bf r},{\bf r}')-G({\bf r},{\bf r}') \frac{\partial u_c}{\partial n'}({\bf r}') \right] dl',
\end{align*}
as expected, where ${\bf r'}$ is a real vector pointing to $\gamma$ and ${\bf r}$ is a complex vector pointing to $A$, which is exactly what we would have obtained by letting ${\bf r}$ become complex in (\ref{eq:initialgreensidentity}). The contours used and the subsequent deformation are illustrated in Figure \ref{fig:appCfig}.
\begin{figure}[h]
	\centering
	\includegraphics[width=0.95\textwidth]{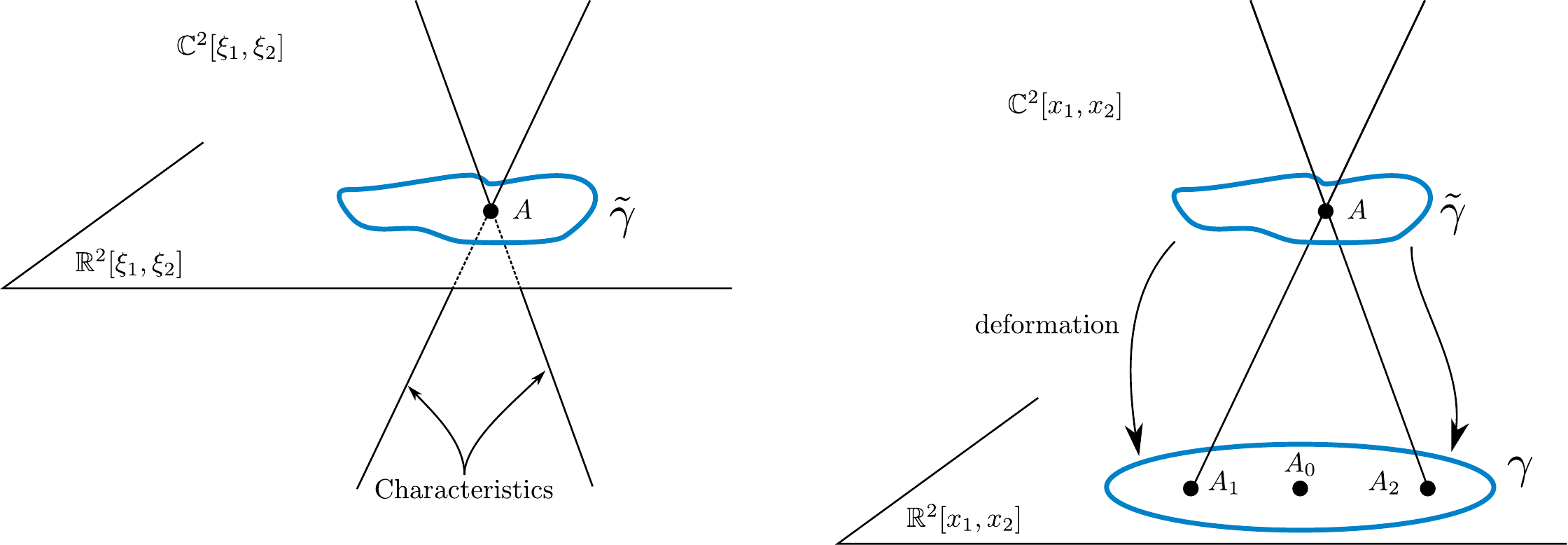}
	\caption{Diagrammatic Illustration of the contours $\tilde{\gamma}$ and $\gamma$ and how one is deformed to the other}
	\label{fig:appCfig}
\end{figure}

\section{Bypasses and argument of the Hankel function}
\label{app:bypasshankel}

Let $A = (x_1, x_2) \in \mathbb{C}^2 \setminus (T \cup \mathbb{R}^2)$. We are interested in the
behaviour of the function
\begin{align*}
	\mathcal{G} (x_1', x_2') & =  H_0^{(1)} \left( \mathscr{k} \sqrt{(x_1 - x_1')^2 +
		(x_2 - x_2')^2} \right),
\end{align*}
in the real plane $(x_1', x_2') \in \mathbb{R}^2$. In particular we want to
know how this function changes along a given contour $\Gamma$ within this real
plane. Upon introducing the two fixed complex quantities $z_1 = x_1 + i x_2$
and $z_2 = x_1 - i x_2$ we can rewrite $\mathcal{G}$ as
\begin{align*}
	\mathcal{G} (x_1', x_2') & = H_0^{(1)} \left( \mathscr{k} \sqrt{[z_1 - (x_1' + i
		x_2')] [z_2 - (x_1' - i x_2')]} \right) .
\end{align*}
For this purpose, it is also convenient to see the real $(x_1', x_2')$ real
plane as a complex plane of the one complex variable $z' = x_1' + ix_2'$, so
that $\mathcal{G}$ can be considered as a function of the complex variable $z'$ and we
may write
\begin{align*}
	\mathcal{G} (z') 
	& =  H_0^{(1)} \left( \mathscr{k} \sqrt{[z' - z_1] \overline{[z' -
		\overline{z_2}]}} \right).
\end{align*}
In the $z'$ complex plane, the
points $A_1$ and $A_2$ are defined as the points $z' = z_1$ and $z' =
\overline{z_2}$ respectively.

Consider now a contour fragment $\Gamma$ that encircles the point $A_1$, i.e. the
point $z' = z_1$, once in the anticlockwise direction. It does not encircle
the point $A_2$. Upon translating the origin of the $z'$ plane by considering
the new variable $\xi'_1 = z' - z_1$, we can rewrite $G$ in the $\xi'_1$
plane
as
\begin{align*}
	\mathcal{G} (\xi'_1) & =  H_0^{(1)} \left( \mathscr{k} \sqrt{\xi'_1 \overline{[\xi'_1 + z_1 -
		\overline{z_2}]}} \right) .
\end{align*}
In the $\xi'_1$ plane, $\Gamma$ encircles $0$ once anticlockwise, but does not
encircle $- z_1 + \overline{z_2}$, so that if we parametrise $\Gamma$ by $\tau
\in [0, 1]$, we have
\begin{align*}
	\xi'_1 (0) & \overset{\Gamma}{\rightarrow} \xi'_1 (1) = e^{2 i \pi} \xi_1' (0),\\
	\overline{[\xi'_1 (0) + z_1 - \overline{z_2}]} & \overset{\Gamma}{\rightarrow} 
	 	\overline{[\xi'_1 (1) + z_1 - \overline{z_2}]}= \overline{[\xi'_1 (0) + z_1 - \overline{z_2}]}.
\end{align*}
Hence, if one introduces the argument of the Hankel function to be
\begin{align*}
	\mathcal{Z} & =   \mathscr{k} \sqrt{\xi'_1 \overline{[\xi'_1 + z_1 -
			\overline{z_2}]}},
\end{align*}
we have
\begin{align*}
	\mathcal{Z} (0) = \mathscr{k} \sqrt{\xi'_1(0) \overline{[\xi'_1(0) + z_1 -
			\overline{z_2}]}} & \overset{\Gamma}{\rightarrow}  \mathscr{k}
	\sqrt{e^{2 i \pi} \xi'_1(0) \overline{[\xi'_1(0) + z_1 -
			\overline{z_2}]}} =
	e^{i \pi} \mathcal{Z} (0).
\end{align*}
Hence, when a fragment of a contour encircles $A_1$ once in the anticlockwise
direction, the argument of the Hankel function is indeed changed from
$\mathcal{Z} \rightarrow e^{i \pi} \mathcal{Z}$. Naturally if the bypass was
clockwise, the argument would pass from $\mathcal{Z} \rightarrow e^{- i \pi}
\mathcal{Z}$.

A similar argument, introducing the new variable $\xi_2' = z' - \overline{z_2}$,
leads to the fact that when a fragment of a contour encircles $A_2$ once in
the anticlockwise direction, the argument of the Hankel function is indeed
changed from $\mathcal{Z} \rightarrow e^{- i \pi} \mathcal{Z}$. Naturally if
the bypass was clockwise, the argument would pass from $\mathcal{Z}
\rightarrow e^{i \pi} \mathcal{Z}$.

\end{document}